\newtheorem{theorem}{Theorem}
\newtheorem{property}[theorem]{Property}
\newtheorem{requirement}[theorem]{Requirement}
\newtheorem{lemma}[theorem]{Lemma}
\newtheorem{definition}[theorem]{Definition}
\newtheorem{example}[theorem]{Example}
\newtheorem{corollary}[theorem]{Corollary}
\newcommand{\good}{\mathcal{F}}
\newcommand{\univ}{\kern0.75pt\mathcal{U}}
\newcommand{\oursys}{commutable\xspace}
\newcommand{\roots}{\textsc{roots}\xspace}
\newcommand{\goodtime}{\mathscr{G}_T}
\newcommand{\goodspace}{\mathscr{G}_S}
\newcommand{\restrtime}{\mathscr{R}_T}
\newcommand{\restrspace}{\mathscr{R}_S}
\newcommand{\restrbound}{\mathscr{R}_B}
\newcommand{\lev}{\textsc{lay}}
\newcommand{\level}{layer\xspace}
\newcommand{\levels}{layers\xspace}
\DeclareMathOperator*{\argmin}{argmin}
\newcommand{\source}{\textsc{source}\xspace}
\newcommand{\start}{\textsc{start}\xspace}
\newcommand{\parent}{\textsc{parent}\xspace}
\newcommand{\parind}{\textsc{pi}\xspace}
\newcommand{\core}{\textsc{core}\xspace}
\newcommand{\restr}{\textsc{restr}\xspace}
\newcommand{\children}{\textsc{children}\xspace}
\newcommand{\rec}{\textsc{spawn}\xspace}
\newcommand{\comp}{\textsc{complete}\xspace}
\newcommand{\choice}{\textsc{choose}\xspace}
\newcommand{\erre}{\textsc{r}\xspace}
\newcommand{\mcs}{\textsc{mcs}\xspace}
\newcommand{\mcss}{\textsc{mcs}s\xspace}
\newcommand{\mccis}{\textsc{mccis}\xspace}
\newcommand{\mcciss}{\textsc{mccis}s\xspace}
\newcommand{\white}{white\xspace}
\newcommand{\black}{black\xspace}
\newcommand{\bclique}{\textsc{bc}-clique\xspace}
\newcommand{\bcliques}{\textsc{bc}-cliques\xspace}
\newtheorem{fact}[theorem]{Fact}
\let\oldnl\nl
\newcommand{\nonl}{\renewcommand{\nl}{\let\nl\oldnl}}
\newcommand{\qexd}{\hfill\textcolor{darkgray}{$\blacktriangleleft$}}
\begin{document}

\sloppy
\title{Listing Maximal Subgraphs in Strongly Accessible Set Systems\thanks{Present address: Alessio Conte, Roberto Grossi, Andrea Marino, and Luca Versari at the
Dipartimento di Informatica, Universit\`a di Pisa,
\texttt{\{conte,grossi,marino,luca.versari\}@di.unipi.it}.}}
\author{Alessio~Conte
\and 
Roberto~Grossi
\and
Andrea~Marino
\and
Luca~Versari
}
\date{March 2018}

\maketitle

\begin{abstract}
 Algorithms for listing the subgraphs satisfying a given property (e.g.,~being a clique, a cut, a cycle, etc.) fall within the general framework of set systems.
 A set system $(\univ, \good)$ uses a ground set $\univ$ (e.g.,~the network nodes) and an indicator $\good \subseteq 2^{\univ}$ of which subsets of $\univ$ have the required property. For the problem of listing all sets in $\good$ maximal under inclusion, the ambitious goal is to cover a large class of set systems, preserving at the same time the efficiency of the enumeration. Among the existing algorithms, the best-known ones list the maximal subsets in time proportional to their number but may require exponential space. In this paper we improve the state of the art in two directions by introducing an algorithmic framework that, under standard suitable conditions, simultaneously (i)~extends the class of problems that can be solved efficiently to \emph{strongly accessible} set systems, and (ii)~reduces the additional space usage from exponential in $|\univ|$ to \emph{stateless}, thus accounting for just $O(q)$ space, where $q \leq |\univ|$ is the largest size of a maximal set in $\good$.
\end{abstract}

\section{Introduction}
\label{sec:introduction}

Algorithms for graph listing have a long history and even if they were born in the 70s in the context of enumerative combinatorics and computational complexity~\cite{goldberg1992efficient,lawler1980generating,read1981survey,stanton2012constructive,valiant1979complexity}, the interest has quickly broadened to a variety of other communities in computer science and not, massively involving algorithm design techniques. 

In network analysis discovering special communities corresponds to finding all the subgraphs with a given property~\cite{ahmed2015efficient,du2007community,lee2010survey,modani2008large,mokken1979cliques,tsourakakis2013denser}. In bioinformatics, listing all the solutions is desirable, as single or few solutions may not be meaningful due to noise of the data, noise of the models, or unclear objectives~\cite{ciriello2008review,klein2012structural,lacroix2008introduction,milreu2014telling,tanay2002discovering}. In graph databases, graph structures are used for semantic queries, where nodes, edges, and properties are used to represent and store data; retrieving information corresponds to find all the suitable subgraphs in these databases~\cite{angles2008survey,cohen2005abstract,williams2007graph}. When dealing with incomplete information, it may be impossible to completely satisfy a query. Subgraph listing algorithms can find answers that \emph{maximally} satisfy a partial query; for instance, there is a one-to-one correspondence between the results of a join or full disjunction query and certain subgraphs of the \emph{assignment graph}, a special graph obtained by combining the relational database with the given query. 
Moreover, the kind of subgraphs to look for depend not only on the database, but also on the query~\cite{cohen2005abstract}.

In this scenario, graph enumeration has left the theoretical border~\cite{catalogue} to meet more stringent requirements: not only a given listing problem must fit a given class of complexity, but its algorithms must be efficient also in real-world applications. To promote efficiency, many works try to eliminate redundancy by, for example, listing only solutions which respect properties of \emph{closure}\cite{boley2010listing} or \emph{maximality}\cite{cohen2008generating}. On the other hand, algorithm design has made a big effort to generalize the graph properties to be enumerated and to unify the corresponding approaches~\cite{avis1996reverse,chiba1985arboricity,cohen2008generating,lawler1980generating,tsukiyama1977new}. These generalizations allow the same algorithm to solve many different problems.

The contribution of this paper fits into this line of research: on one side, we want to obtain efficient listing algorithms able to list maximal solutions in large networks; on the other side, we aim at designing an algorithmic framework which solves simultaneously many problems and leave the designer in charge of few core tasks depending on the specific application. In particular, we focus on efficient enumeration algorithms for maximal subgraphs satisfying a given property (e.g.~being a clique, a cut, a cycle, a matching, etc.), as they fall within the general framework of set systems~\cite{cohen2008generating,lawler1980generating}.

\paragraph*{Set systems and maximality.} Consider for instance the problem of listing the maximal cliques in a graph $G = (V,E)$. We can see this problem as defined on a ground set $\univ = V$, where each clique is represented by its vertex set $X \subseteq \univ$ that satisfies the condition $X \in \good$, where membership in $\good$ is equivalent to say that for each pair $u,v \in X$ such that $u \neq v$, we have $\{u,v\} \in E$. It is also always assumed that $\emptyset \in \good$. 
In general,
a set system $(\univ, \good)$ is a unifying model to formulate listing problems on graphs by specifying a ground set $\univ$ of objects (e.g., the vertices of the graph) and indicating which combination of these objects meets the wanted property, expressed as the collection $\good\subseteq 2^{\univ}$ of ``good'' subsets that fulfill the property~\cite{lawler1980generating}. For this reason, in the remainder of the paper we will use the term ``property'' as a synonym of set system, meaning the set system induced by the given property (e.g., being a clique) in a graph. Moreover, we assume in the rest of the paper that $\good$ is given algorithmically, i.e. we can check whether $X\in\good$ in time $\goodtime = poly(|\univ|)$ and space $\goodspace = \Omega(q)$, where $q$ is the maximum size among the sets $X\in\good$. 

We are interested in listing just the \emph{maximal solutions} of $\good$, where a solution $X \in \good$ is maximal if there exists no $Y \in \good$ such that $X \subset Y$. This is of interest for many applications in literature (e.g.~\cite{chiba1985arboricity,tsukiyama1977new,JOHNSON1988119,koch1996algorithm}) where instances can have a number of maximal solutions that is much smaller than $|\good| \leq 2^{|\univ|}$. However, generating only maximal solutions usually makes the listing problem harder, as it generally requires more sophisticated techniques: for example, we are not aware of any \emph{Constant Amortized Time} algorithm for listing maximal patterns in graph, while many have been designed for non-maximal listing problems\cite{ruskey2003combinatorial,uno2015constant}. In some cases, it may even be difficult to decide whether a solution $X\in \good$ is maximal or not. In this paper we will focus on set systems which respect the following property: $X,Y\in\good$ and $X\subset Y$ implies that there exists $z\in Y\setminus X$ such that $X \cup\{z\}\in\good$~\cite{arimura2009polynomial,boley2010listing}. This class of set systems is called \emph{strongly accessible} and it is the most general one for which the maximality of a solution can be verified in polynomial time.

We remark that there is a trade-off between generality and efficiency, as a truly general output sensitive\footnote{We recall that output-sensitive means there is a bound for the overall time-cost of the algorithm which is linear in the number of solutions and polynomial in the size of the instance. This corresponds also to the notion of P-enumerability introduced by Valiant~\cite{valiant1979complexity}.} framework for listing maximal solutions seems hard to achieve for all set systems: it is known from the early 80s that this would imply \textsc{p}=\textsc{np}~\cite{lawler1980generating}. 
For this reason, Lawler et al.~\cite{lawler1980generating} and later Cohen et al.~\cite{cohen2008generating} have shown that the hardness of a listing problem can be linked to that of solving an easier core task, called in~\cite{cohen2008generating} \emph{restricted problem}.
Keeping this in mind, we will proceed along two orthogonal directions simultaneously: (i)~our framework can deal with a wide class of set systems; (ii)~we heavily improve the memory usage over the currently known exponential space (which means just a poly-logarithm of the output size~\cite{fukuda96}). We actually go further when output has not to be stored, and show that our framework is \emph{stateless}, meaning that it uses just $O(q)$ additional space, whenever the restricted problem allows for it.\footnote{We argue that this improvement affects the running time too, as this extremely small space is more likely to fit private caches in multiple core executions, where the massive graph is in the shared memory (e.g.~\cite{noi2016icalp}). 
However, a discussion about this topic is outside the scope of this paper.}

\begin{figure}[t]
    \centering
    \includegraphics[width=.6\textwidth]{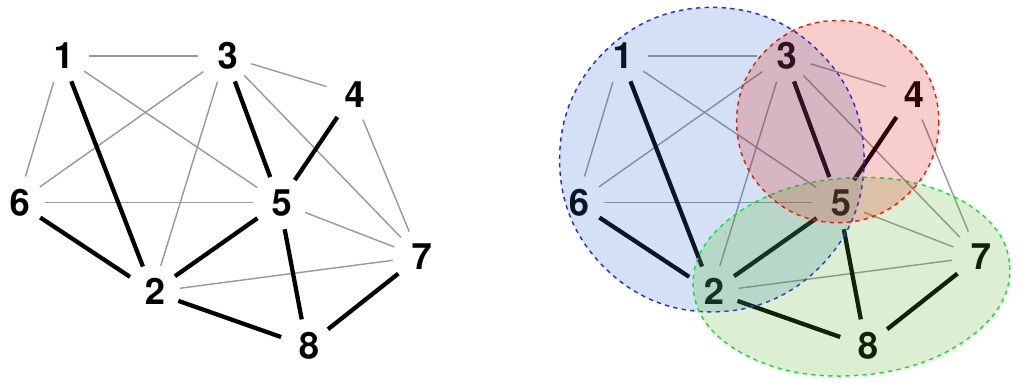}
    \caption{A graph with black and white edges, and its \bcliques}
    \label{fig:bccliques}
\end{figure}

\paragraph*{Case studies and open problems.} Before claiming our results, we give some necessary background for our work. Many works in the literature focus their attention on listing maximal solutions of specific set systems, solving them with ad hoc techniques. For example, they focus on cliques~\cite{bronkerbosch,EppsteinLS10,TomitaK09,makino2004new,Akkoyunlu73}, independent sets~\cite{JOHNSON1988119,tsukiyama1977new}, acyclic subgraphs~\cite{yau1967generation,schwikowski2002enumerating}, matchings~\cite{Gely20091447,Uno01}, $k$-plexes~\cite{berlowitz2015efficient,wu2007parallel} which all correspond to \emph{hereditary properties} (or more generally \emph{independence set systems}), i.e. $X\subseteq Y\subseteq \univ$ and $Y\in\good$ implies $X\in\good$~\cite{lawler1980generating}. These kinds of problems can be often efficiently solved in output sensitive time and polynomial space by using general frameworks~\cite{chiba1985arboricity,cohen2008generating,lawler1980generating}. 

However, some tricky properties do not fit in the above scenario. The \emph{maximal common connected induced subgraphs} (\mcciss) between two graphs $A$ and $B$ is one notable example. This problem is at least as hard as the graph isomorphism problem~\cite{babai2016graph}.
For this reason a weaker form is considered, where the individual isomorphisms are listed (noting that many isomorphisms can correspond to the same \mccis). Surprisingly, no output-sensitive listing algorithm with polynomial space is known for even this version, while several papers~\cite{Cao01082008,koch2001enumerating,koch1996algorithm} adapt existing techniques for maximal clique enumeration without any guarantee. We will use this problem as a running example through the paper, showing that our framework yields an output-sensitive and polynomial space algorithm for the problem. What is known is the transformation by Koch~\cite{koch1996algorithm}, so that it reduces to the enumeration of special cliques in a new graph, called \emph{product graph} $G$, whose edges are either black or white (see Figure~\ref{fig:bccliques}): 
let $G_B$ be the edge subgraph of $G$ containing only the black edges and define a \bclique as a connected subgraph of $G_B$ whose nodes form a clique in $G$; then, for each (maximal) isomorphism of $A$ and $B$, there is a (maximal) \bclique in $G$, and vice versa.

Listing \bcliques share with other listing problems the so called \emph{connected hereditary property}: all the solutions in $\good$ are connected in some given graph ($G_B$ in the case of \bcliques); moreover, given $X\subseteq Y\subseteq \univ$, if $X$ is connected and $Y\in\good$ then $X\in\good$, i.e., $\good$ is closed wrt connected induced subgraphs~\cite{cohen2008generating}. 
Finding a memory-efficient framework for connected hereditary graph properties is still open, as the only option is using the framework in~\cite{cohen2008generating,cohen2005abstract}, which requires keeping all the maximal solutions in memory. This open problem is interesting as its solution would allow memory-efficient enumeration also for other well-known problems such as connected $k$-plexes~\cite{berlowitz2015efficient} and several types of database queries~\cite{cohen2005abstract}.

We remark that hereditary and connected hereditary set systems are both also strongly accessible set systems, as shown in Figure~\ref{fig:hierarchy}. However, it is easy to show that the strongly accessible class contains more than just hereditary and connected hereditary properties: for example, if $R$ is a set of vertices, the property of ``being a \bclique containing a vertex of $R$'' is not connected hereditary (as a connected subset will still be a \bclique but may not contain any vertex from $R$), but it is straightforward to see that it is still strongly accessible. In general, we will refer to hereditary and connected hereditary properties with the additional constraint of containing a vertex from a required set as \textit{required hereditary} and \textit{required connected hereditary} respectively. Such properties are strongly accessible, but they are no more hereditary or connected hereditary.
For the sake of clarity, Table~\ref{tbl:classes} summarizes our definitions and gives some examples.

\begin{table}[t]
\renewcommand{\arraystretch}{2}
\centering
\begin{tabular}{|p{1in}|p{2in}|p{6cm}|}
    \hline
    \multicolumn{1}{|c|}{\textsc{class}} & \multicolumn{1}{c|}{\textsc{examples}} & \multicolumn{1}{c|}{\textsc{property}}\\
    \hline
    \multirow{1}[8]{1in}{Hereditary (ISS)}          & Cliques, Independent Sets, Induced Forests  & $X\subseteq Y\subseteq \univ$ and $Y\in\good$ implies $X\in\good$\\
    \hline
    \multirow{1}[10]{1in}{Connected Hereditary} & Connected \textit{k}-plexes, Common Connected Induced Subgraphs (\bcliques), Induced Trees & Let $X\subseteq Y\subseteq \univ$. If $X$ is connected and $Y\in\good$ then $X\in\good$\\
    \hline
    \multirow{1}[10]{1in}{Strongly Accessible}  & Hereditary, Connected Hereditary, Required Hereditary, Required Connected Hereditary & $X,Y\in\good$ and $X\subset Y$ implies that there exists $z\in Y\setminus X$ such that $X \cup\{z\}\in\good$\\
    \hline
\end{tabular}
\caption{Classes of Set Systems.}
\label{tbl:classes}
\renewcommand{\arraystretch}{1}
\end{table}

\paragraph*{Results.} In this paper, we provide a stateless output-sensitive framework for connected hereditary properties. This is the first polynomial space output-sensitive framework for this kind of systems, solving the open problem by~\cite{cohen2008generating}, as stateless implies polynomial space (actually it can be much less). As an application, our result implies the first output sensitive polynomial memory listing algorithm for maximal isomorphisms (\bcliques) corresponding to maximal common connected induced subgraphs, using just $O(q)$ memory. 

Our approach extends to the strongly accessible set systems independently from the core task, with a cost which is not far from the optimum, which we prove to be in the worst case $\Omega(\alpha 2^{q/2})$, where $\alpha$ is the number of maximal solutions, unless the Strong Exponential Time Hypothesis~\cite{impagliazzo1999complexity} is false. 
\begin{figure}
    \centering
    \includegraphics[width=.4\textwidth]{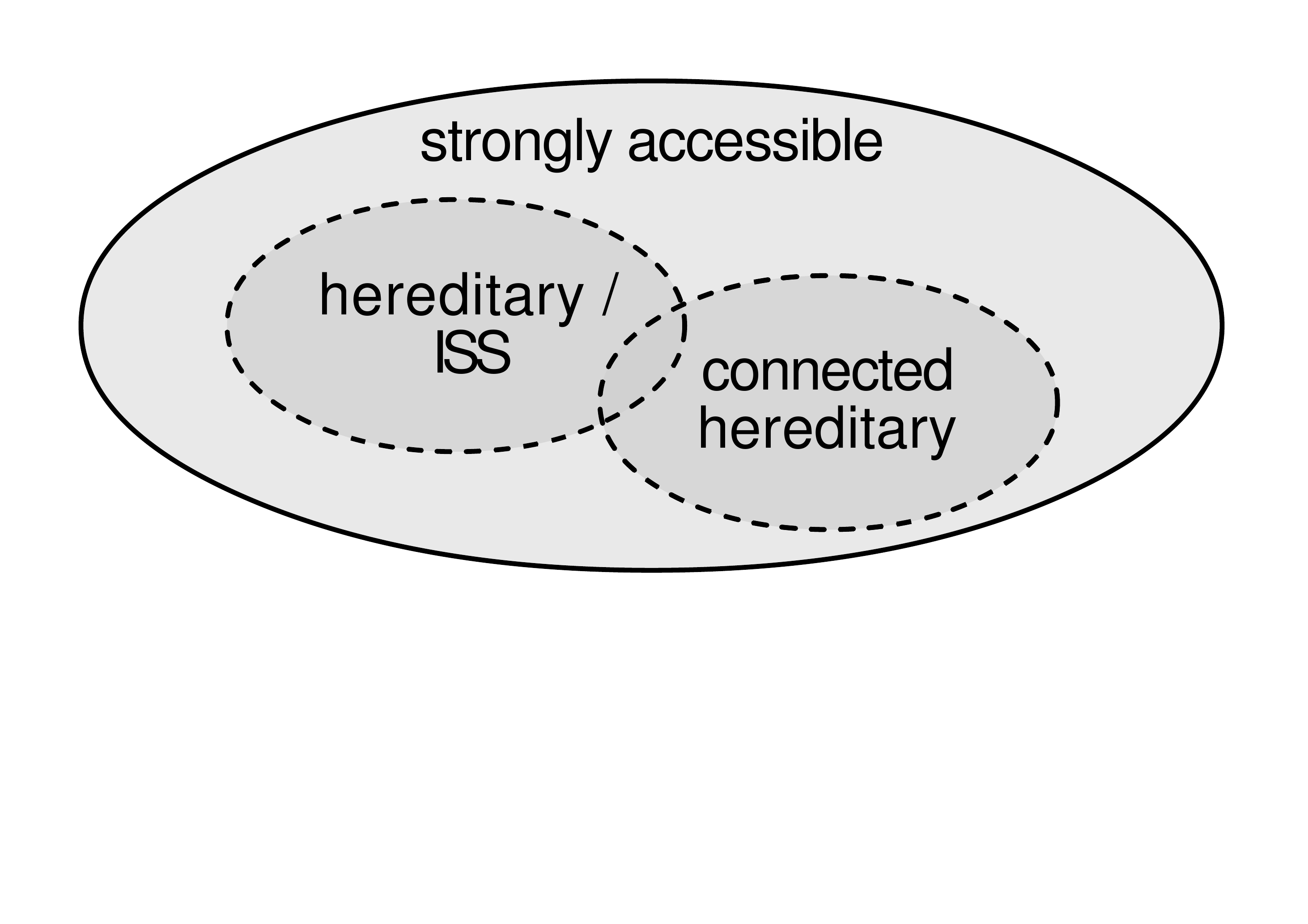}
    \caption{Hierarchy of Set Systems}
    \label{fig:hierarchy}
\end{figure}

\paragraph*{Techniques.}
When designing listing algorithms, \emph{binary partition} is a simple yet powerful technique which consists in recursively dividing the set of solutions into two or more non-overlapping partitions. In order to get output-sensitive algorithms, it is necessary to decide whether each of these sets is empty in polynomial time. However, when dealing with maximal solutions, a result by Lawler et al.~\cite{lawler1980generating} which we will discuss in Section~\ref{sec:extended-work} implies that this cannot be done even for restrictive classes of problems unless \textsc{p=np}.

In this scenario, reverse search by Avis and Fukuda~\cite{avis1996reverse} is a more powerful technique avoiding this kind of problems. Its general schema is based on operations which allow jumping from a solution to another. These operations define a graph-like structure as that represented in Figure~\ref{fig:solspace}, that can be traversed in order to find all of them. The key feature of reverse-search is then pruning this structure to obtain a tree-like (or forest-like) one: a tree-like structure is easy to traverse without incurring multiple times in the same nodes, whereas a more complex graph-like structure requires keeping in memory all the visited nodes. For the reader who is not familiar with reverse search, we refer to the original paper~\cite{avis1996reverse} or to our overview in Section~\ref{sec:classical-reverse-search}. 

Whether reverse search could be efficiently adapted to be output sensitive and memory efficient on non-hereditary set systems was an open problem~\cite{cohen2008generating}. We show that state of the art strategies for many specific set systems cannot be applied, as the routines to transform a maximal solution into another, when using the lexicographical order of the elements, involve NP-hard problems. We solve the problem by representing a solution not as \emph{a bag of elements}, but with a \emph{canonical} representation which takes into account the topology of the solution. This new layer of complexity needs non trivial ideas and technicalities to avoid generating duplicate solutions, in order to ensure correctness without storing the output in memory.

\begin{figure}
    \centering
    \includegraphics[width=0.5\textwidth]{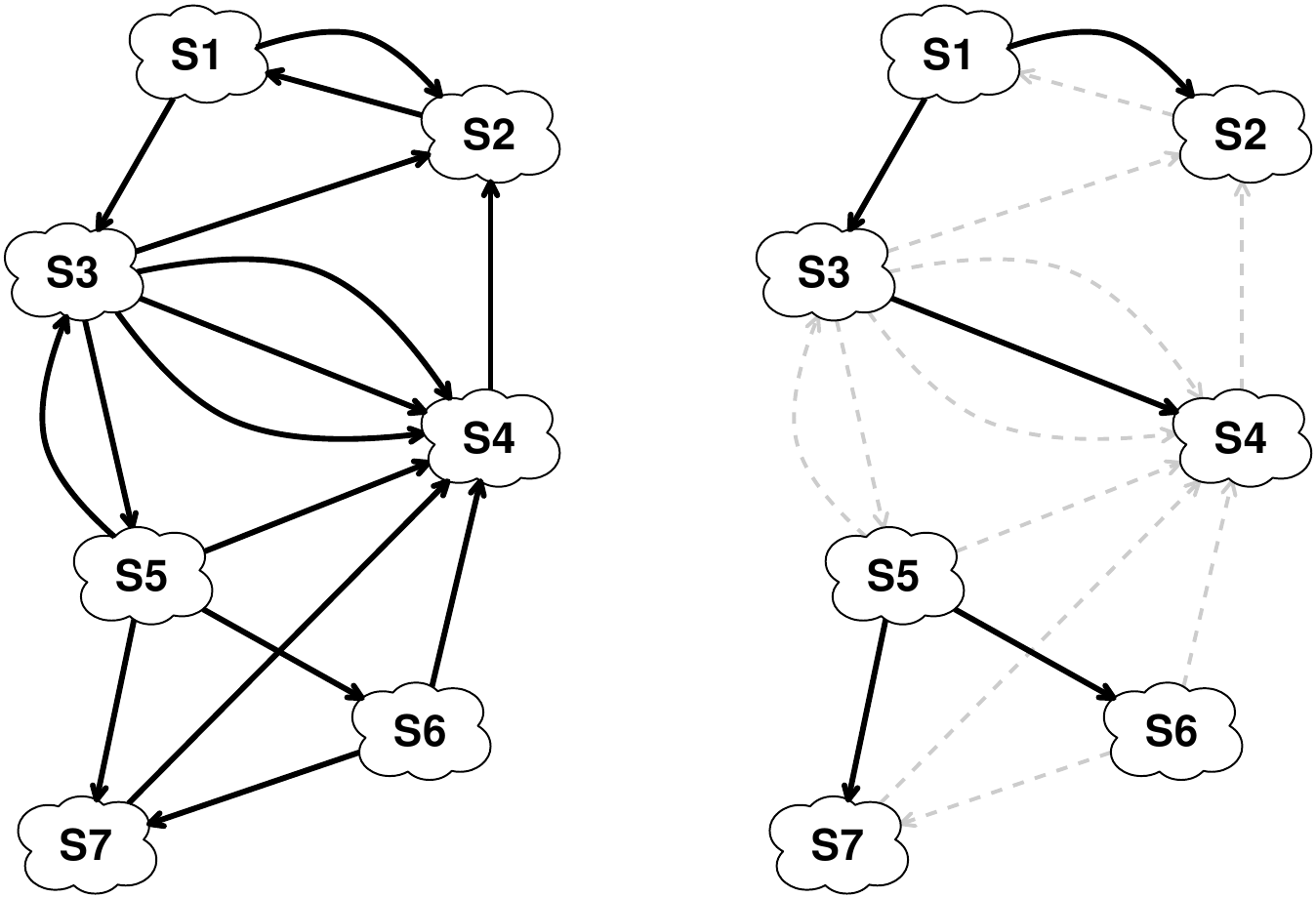}
    \caption{On the left, a graph-like structure whose nodes represent the maximal solutions of a set system and edges represent all the possible ways to ``jump'' from a solution to another. On the right, a possible forest-like structure induced by a reverse-seach algorithm.}
    \label{fig:solspace}
\end{figure}


\section{Related work}
\label{sec:extended-work}

\paragraph*{Hardness and restricted problem for hereditary properties.} Hereditary properties, also called Independence set systems, have been well-known and studied since the 80's. One of the most striking result is the one by Lawler et al.~\cite{lawler1980generating}, which shows that generating all the maximal solutions of an independent set system in an output-sensitive fashion implies \textsc{p}=\textsc{np}. This is proved by reducing SAT formulas with $N$ variables to appropriate set systems which have $N$ dummy maximal solutions, plus one maximal solution for each satisfying assignment of the corresponding formula. Moreover, in the same work the notion of \emph{restricted problem} has been introduced. As the restricted problem is a sub-problem of the general one, we have that if the general problem can be solved with some guarantee, the restricted one can be solved with the same guarantee. 
On the other hand, both~\cite{lawler1980generating} and~\cite{cohen2005abstract} show that the running time and space requirement of the general problem also depends on those of the restricted one so that if the latter ones are easier to solve then the former can potentially achieve output sensitivity. For hereditary properties,~\cite{cohen2008generating} proves that the general problem can be solved in polynomial space and polynomial delay if the restricted problem can be solved in polynomial space and polynomial total time.

\paragraph*{Framework by Cohen et al. for connected hereditary properties.}
The work by~\cite{cohen2008generating} proves significant insights for connected hereditary set systems, by providing an algorithm for the general problem which uses the restricted problem as a subroutine. They show that if the restricted can be solved in polynomial time (resp. incremental polynomial time, polynomial total time) then the general problem can be solved in polynomial delay (resp. incremental polynomial time, polynomial total time). This relationship extends that among hereditary properties and their restricted problems, at the cost of using exponential space.
Indeed, for connected hereditary properties, their framework essentially corresponds to a visit on a graph-like structure like that in Figure~\ref{fig:solspace}, storing in memory \emph{all} maximal solutions found so far. This clearly requires exponential space, as the number of maximal solutions can be exponential. Reducing the space usage to polynomial, while keeping the delay bounded, is left as an open problem in~\cite{cohen2008generating}. Our paper will also fill this gap.

Calling $\restrtime$ and $\restrspace$ respectively the time and space required to solve the restricted problem for the problem at hand, we show in Table~\ref{tbl:comparison-v3} a summary of the results achieved by the techniques in~\cite{lawler1980generating} and~\cite{cohen2005abstract}, as well as those of this work.

\paragraph*{Frameworks for closed objects by Boley et al. and Uno et al.} 
Other works have focused on set systems more general than the ones studied by~\cite{cohen2008generating}, like strongly accessible set systems, but they used different requirements for the objects to look for. This is the case of the framework by~\cite{boley2010listing} and~\cite{arimura2009polynomial}. In these works, a set system $(\univ,\good)$ is given together with a \emph{closure} operator $\sigma:\good\to\good$, which satisfies  extensivity, monotonicity, and idempotence properties.
In such a context, a set $F\in \good$ is called closed if $\sigma(F)=F$ and these works aim at listing all the $F\in \good$ which are closed. In principle, the concept of closure is similar to that of maximality, in that both allow the algorithm to ignore many solutions that are not significant in that their information is included in some other more important one, thus improving the significance of the output and the total running time. We remark, however, that this is a different kind of problem as we require maximal solutions instead, for which monotonicity does not apply.

\begin{table}[t]
\centering
\begin{tabular}{|c|c|c|c|c|}
    \hline
    \textsc{class} & &
    \multicolumn{1}{c|}{\textsc{Lawler et al}~\cite{lawler1980generating}} &
    \multicolumn{1}{c|}{\textsc{Cohen et al}~\cite{cohen2008generating}} & \multicolumn{1}{c|}{\textsc{this work}}\\ 
    \hline
    \hline
    \multirow{2}{*}{Hereditary/ISS}  & \textsc{space} & \multicolumn{2}{c|}{$poly(|\univ|+\restrspace)$} & $O(\goodspace+\restrspace)$ \\
    \cline{2-5}
    & \textsc{delay} & \multicolumn{3}{c|}{$poly(|\univ|+\restrtime)$} \\
    \hline
    \hline
    \multirow{2}{*}{Connected Hereditary}  & \textsc{space} &  & $exp(|\univ|)$ & $O(\goodspace+\restrspace)$ \\
    \cline{2-2}\cline{4-5}
    & \textsc{delay} &  & \multicolumn{2}{c|}{ $poly(|\univ|+\restrtime)$ } \\
    \hline
    \hline
    \multirow{2}{*}{Strongly Accessible}  & \textsc{space}      & \multicolumn{2}{c|}{ } & $O(\goodspace)$    \\
    \cline{2-2}\cline{5-5}
    & \textsc{time}     & \multicolumn{2}{c|}{ } & $O(2^q)$    \\
    \hline
\end{tabular}
\caption{State of the art comparison, where $\goodtime$ and $\goodspace$ are respectively the time and space to recognize whether $X\in\good$, while $\restrtime$ and $\restrspace$ are the time and space required to get the next solution of the restricted problem, $q$ is the maximum size of the solutions in $\good$.}
\label{tbl:comparison-v3}
\end{table}

\paragraph*{Low memory enumeration.}
Even when a listing algorithm is time-efficient, i.e. polynomial total time or polynomial delay, turning this algorithm in a polynomial space one can be challenging. 
Indeed, as the size of the output can be exponential, polynomial space means that the algorithm may only store up to a poly-logarithm of the size of the output. 
As observed by Fukuda, introducing the notion of ``compactness'', ``an enumeration algorithm may not have to store all output in the active memory'': indeed, ``some algorithms can simply dump each output object without storing it''~\cite{fukuda96}. 
On the other hand, it is sometimes hard to achieve both output sensitive time and polynomial space, for instance because all the previous solutions need to be stored to check whether a new solution has been already generated or not, like in the approach by Cohen \emph{et al}.~\cite{cohen2008generating}.
When the absence of duplication can be ensured by the algorithmic structure, however, each solution can be output and thrown away, so in these cases polynomial space is often sufficient. In some special cases, it is even possible to design output sensitive algorithms whose additional space is just the maximum size of the objects to be listed~\cite{noi2016icalp,Conte2017mis,avis1996reverse}. One of the main ingredients of these kind of results is a \emph{stateless} structure, which is the capability of a recursive algorithm to rebuild the recursion stack when returning from a nested call.
As most current output-sensitive approaches have a recursive nature, they can reach $\Theta(n)$ nesting levels even for sparse graphs. In these cases, using the stack should be avoided: indeed, just storing one memory word per recursion level takes the memory usage to $\Omega(n)$. We will show that our framework can be implemented in a stateless fashion, using as little as $O(q)$ additional space.

\paragraph*{Case study: maximal common connected induced subgraphs.} For any two given input graphs $A$ and $B$, a
subgraph $S$ of $A$ is in common with $B$ if $S$ is isomorphic to a subgraph of $B$: it is maximal if there is no other common subgraph that strictly contains it. The \emph{maximal} common subgraph
(\mcs) problem requires discovering all the \mcs's of $G$ and $H$. The \mcs problem can be constrained to \emph{connected} and \emph{induced} subgraphs (\mccis)~\cite{Cao01082008,koch2001enumerating,koch1996algorithm}, where the latter means that all the edges
of $G$ between nodes in the \mcs are mapped to edges of $H$, and vice versa. The connectivity constraint is important to remove redundant solutions corresponding to compatible combinations of disjoint connected subgraphs, while the induced constraint is used to reduce the search space while still preserving the significance of the result~\cite{Cao01082008}.

Finding an output sensitive algorithm with polynomial space for the \mccis implies a polynomial algorithm for the well-known graph isomorphism problem, whose current best bound is quasi-polynomial~\cite{babai2016graph}. A weaker version corresponds to finding maximal isomorphisms corresponding to \mcciss. Finding an output sensitive algorithm with polynomial space for listing these isomorphisms is open, as several works~\cite{Cao01082008,koch2001enumerating,koch1996algorithm} adapt existing techniques for maximal clique enumeration without any guarantee. Indeed, by applying the transformation by Levi~\cite{levi73note}, the maximal isomorphisms corresponding to \mcss reduces to maximal clique enumeration in a new graph, called \emph{product graph}. A variation of this product graph by~\cite{koch1996algorithm} allows to reduce the maximal isomorphism corresponding to \mcciss to listing special cliques. In particular, given $A$ and $B$, the vertices of the product graph $G$ obtained by $A$ and $B$, are the pairs in $V(A)\times V(B)$ and there is a \black\ (resp. \white) edge between $(u,x)$ and $(v,y)$ whether $\{u,v\}\in E(A)$ and $\{x,y\}\in E(B)$ (resp. $\{u,v\}\not\in E(A)$ and $\{x,y\}\not\in E(B)$). Intuitively, an edge of $G$ is \black\ whether the assignments are compatible and the corresponding vertices are both connected in $A$ and $B$. Let $G_B$ be the edge subgraph of $G$ containing only black edges. For each isomorphism corresponding to \mcciss of $A$ and $B$, there is a maximal clique in $G$ connected by \black\ edges in $G_B$, and vice versa. We call these subgraphs of $G$ as \bclique. Hence, in general, given a graph $G$ whose edges are either black or white, we will show how to list all the maximal {\bclique}s in $G$ in an output sensitive fashion without storing all the solutions.

\section{Enumeration Framework for Strongly Accessible Set System}
\label{sec:our-enum-framework}

Reverse search is a powerful enumeration tool that can be made stateless, so a natural question is in which cases it can be efficiently applied to list the maximal solutions\footnote{Non maximal solutions can also be listed using reverse search but this is outside the scope of the paper.} of set systems. We provide answers to this question by introducing our framework for strongly accessible set systems
As this task is rather more complex than expected, we provide the following road map to help the reader. 

\subsection{Roadmap}
\label{sub:road-map}

We begin in Section~\ref{sec:classical-reverse-search} by highlighting some properties (in Definition~\ref{prop:rev-search}) that allow a set system to have a reverse search algorithm listing its maximal solutions. Looking at the literature, these properties are adopted by many enumeration algorithms, which explicitly or implicitly\footnote{In cases such as~\cite{avis1996reverse} this logic is encoded in the element selection order in the computational tree.} rely on a routine, which we denote \comp, that is applied to a suitable portion of the current maximal solution to generate another maximal solution (Property~\ref{prop:class-rev-search}). 

Routine \comp is at the heart of these enumeration algorithms because it gives a method to generate new maximal solutions from known ones. Its properties can be used to avoid generating duplicate solutions, thus giving a way to ensure correctness without storing the output in memory. Unfortunately, keeping the fundamental properties of \comp valid for (\oursys) strongly accessible set systems can be hard (Lemmas~\ref{lem:hardness} and~\ref{lemma:conditional_lb}).

In Section~\ref{sec:our-framework} we avoid this issue by giving a different definition of \comp, based on a black box \choice to select the next element to be added to the current partial solution $S'$. This induces a canonical ordering of the elements of a maximal solution $S$, which is not necessarily the sorted order of its elements, but it is rather dictated by the order in which \choice extracts the elements to be added. Differently from before, \comp takes polynomial time (iff \choice takes polynomial time) and the canonical ordering guarantees that the vertices in any prefix of $S$ induce a (partial) solution $S'\in \good$. We use this to define a total order among the maximal solutions, which can be listed by the reverse search (Theorem~\ref{the:main-strongly}) when a suitable condition (Requirement~\ref{prop:pref-min}) is satisfied.

In Section~\ref{sec:commutable} we refine our approach when the strongly accessible set system is commutable. We define \choice so that the elements of $S$ are partitioned into layers (Definition~\ref{def:layer}): intuitively, the elements in the same layer can be chosen in any given order; instead, those in the next layer should be taken only after all the elements in the current layer have been consumed by \choice.
Although this mechanism of layers is significantly more complex, it gives a more refined way to generate new maximal solutions from the current one (Lemma~\ref{lemma:key}). Its strength lays in generalizing the previous approaches~\cite{cohen2008generating,lawler1980generating} based on restricted problems to the wider class of \oursys set systems. We obtain improved bounds wherever the restricted problem can be solved efficiently (Theorems~\ref{the:commutable-enumeration}, \ref{the:commutable-running}, and~\ref{the:space-bounds}). 

Finally, in Section~\ref{sec:stateless} we observe that our algorithms can be made stateless, simulating the traversal of the directed forest implicitly induced by these algorithms. This approach uses just $O(q)$ more space than solving the restricted problem.

\subsection{Using reverse search: initial obstacles}
\label{sec:classical-reverse-search}

From now on, we will assume that $\good$ is nonempty, as otherwise the listing problem is trivial. For a set system $(\univ, \good)$, we define \parent  as a partial function whose domain is the set of maximal solutions for $(\univ, \good)$: given a maximal solution $S$, either \parent returns another maximal solution $S' \neq S$ or is \emph{undefined}. In the latter case, we say that $S$ is a \emph{root}.

\begin{definition}
\label{prop:rev-search}
A set system $(\univ, \good)$ is called \emph{reverse searchable} if it admits a \parent function that satisfies the following sufficient conditions for any maximal solution $S$.
\begin{enumerate}[\tt (A)]
\item\label{rev-b} $\parent(S)$ can be computed from $S$ given the knowledge of $(\univ, \good)$; also, a compatible \emph{order} $\prec$ (that is, $\parent(S)\prec~S$) on the maximal solutions of $(\univ, \good)$ must exist.
\item\label{rev-c} $\roots =\{ X : X \textrm{\ is\ a\ root} \}$ can be computed given the knowledge of $(\univ, \good)$. 
\item\label{rev-d} $\children(S) = \{X :\parent(X)=S\}$ can be computed from $S$ given $(\univ, \good)$.
\end{enumerate}
\end{definition}

Given a set system $(\univ, \good)$ that is \emph{reverse searchable} for a given \parent function, all the maximal solutions for $(\univ, \good)$ can be enumerated with no duplicates. Indeed, conditions~\ref{rev-b}--\ref{rev-d} in Definition~\ref{prop:rev-search} define a directed forest whose nodes, i.e., the maximal solutions of $(\univ, \good)$, can be listed by simulating a traversal, as proved next. A visual example of this structure is shown in Figure~\ref{fig:solspace} (right), where $S1$ and $S5$ are the roots.

\begin{lemma}
\label{lem:basic}
Given a set system $(\univ, \good)$ that is \emph{reverse searchable} for a given \parent function, all the maximal solutions for $(\univ, \good)$ can be enumerated with no duplicates.
\end{lemma}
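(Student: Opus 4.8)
The goal is to show that the directed structure implicitly defined by \parent on the set of maximal solutions is in fact a forest that can be traversed without duplicates. The plan is to first verify the structural claim, and then describe the traversal.

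First I would argue that \parent defines a directed forest. Consider the directed graph whose vertices are the maximal solutions of $(\univ,\good)$, with an arc from each non-root $S$ to $\parent(S)$. By condition~\ref{rev-b} there is an order $\prec$ on the maximal solutions with $\parent(S) \prec S$ whenever $S$ is not a root. Hence following the arcs out of any $S$ produces a strictly $\prec$-decreasing sequence of maximal solutions; since there are finitely many maximal solutions, this sequence must terminate, and it can only terminate at a vertex with no outgoing arc, i.e., a root. Each non-root has out-degree exactly one (\parent is a function), so the underlying structure is a directed forest whose roots are exactly the elements of \roots.

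Next I would describe the enumeration itself: perform a traversal of this forest that visits every node. Starting from \roots (computable by condition~\ref{rev-c}), recursively expand each solution $S$ by computing $\children(S)$ via condition~\ref{rev-d} and recursing on each child. Correctness then reduces to two observations about the forest structure just established: (i)~every maximal solution is reached, because iterating \parent from any maximal solution eventually arrives at a root, so every solution lies on a \parent-path to some root and is therefore visited when that root's subtree is explored; and (ii)~no maximal solution is output twice, because in a forest each node has a unique path to its root, equivalently each node is the child of at most one node, so it is generated exactly once by the traversal. Finally I would note that all the quantities needed at each step --- \roots, and $\children(S)$ for the current $S$ --- are computable from $(\univ,\good)$ by Definition~\ref{prop:rev-search}, so the procedure is well-defined.

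The only mild subtlety, and the step I would be most careful about, is justifying that the \parent-arcs genuinely form an \emph{acyclic} structure rather than merely a functional graph: this is exactly where the existence of the compatible order $\prec$ in condition~\ref{rev-b} is essential, since a functional graph in general may contain cycles ("rho-shaped" components). Once acyclicity is in hand, everything else is the standard argument that a tree/forest traversal visits each node exactly once, and the remaining work is just bookkeeping to confirm that $\children$ is consistent with $\parent$ (which is immediate from the definition $\children(S) = \{X : \parent(X) = S\}$) so that the recursion neither misses nor repeats a node.
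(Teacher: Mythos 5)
Your proof is correct and follows essentially the same route as the paper: a recursive traversal from \roots via \children, using the order $\prec$ from condition~\ref{rev-b} to rule out cycles (you phrase this as a strictly decreasing parent-chain reaching a root, the paper as a minimal-counterexample argument, which are interchangeable), and the uniqueness of \parent to exclude duplicates. No gaps; the forest-structure observation you make explicit is exactly what the paper's proof uses implicitly.
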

\begin{proof}
The algorithm to find all the maximal solutions works as follows. For each $S\in\roots$, invoke the following recursive procedure $\textsc{rec}(S)$. Given a solution $S$, $\textsc{rec}(S)$ simply calls $\textsc{rec}(S')$ for each $S'\in\children(S)$. This suffices to find all the solutions. Indeed, by contradiction, let $S$ be the minimum solution under $\prec$ that is not found. If $S$ is a root, it is found by Definition \ref{prop:rev-search}.\ref{rev-c}. Otherwise, there exists $P=\parent(S)$, which is found since $P \prec S$ by Definition \ref{prop:rev-search}.\ref{rev-b}. As $S\in \children(P)$, $S$ is then found by Definition \ref{prop:rev-search}.\ref{rev-d}. Moreover, each solution is found exactly once. As each solution $S$ has only one parent and there are no cycles by Definition \ref{prop:rev-search}.\ref{rev-b}, we have that there is only one $P$ such that $S\in\children(P)$. 
\end{proof}

To actually implement Definition~\ref{prop:rev-search} many algorithms rely (directly or indirectly) on the following $\comp$ function, whose domain is the whole $\good$, not only the maximal sets in $\good$ (Tsukiyama et al.~\cite{tsukiyama1977new} inspired many of these algorithms).

\begin{property}
\label{prop:class-rev-search}
A set system $(\univ, \good)$ is \emph{reverse searchable} if the following conditions are satisfied.
\begin{itemize}
\item The compatible order between maximal solutions, i.e. $\prec$, is the lexicographical order.
\item Given a solution $X\in \good$, not necessarily maximal, $\comp(X)$ returns the lexicographic minimum among all the maximal solutions containing $X$.
\item $\parent(S)$, if it exists, is defined as $\comp(X)$ for some $X \subset S$, with $\comp(X) \neq S$. 
\item A child $S' \in \children(S)$, if it exists, is defined as $S'=\comp(X \cup \{w\})$ for some $X \subset S$ and $w \in \univ$. 
\end{itemize}
\end{property}
\begin{proof}
 If we fail to compute $\parent(S)$, then $S$ is a root. Otherwise, $\parent(S)=\comp(X)\neq S$ for some $X \subset S$, and the above conditions imply that $\comp(X) \prec S$ as $S$ is one of the maximal solutions containing $X$ but it cannot be the lexicographic minimum. Finally, $ \children(S)$ can be computed using \comp as stated in the last condition.
\end{proof}

The conditions in Property~\ref{prop:class-rev-search} imply those in Definition~\ref{prop:rev-search} (the converse is not necessarily true). Unfortunately this approach cannot be applied to strongly accessible set systems, as the $\comp$ function in Property~\ref{prop:class-rev-search} can be NP-hard to compute, e.g. in the case of \bcliques.

\begin{figure}[t]
    \centering
    \includegraphics[scale=0.35]{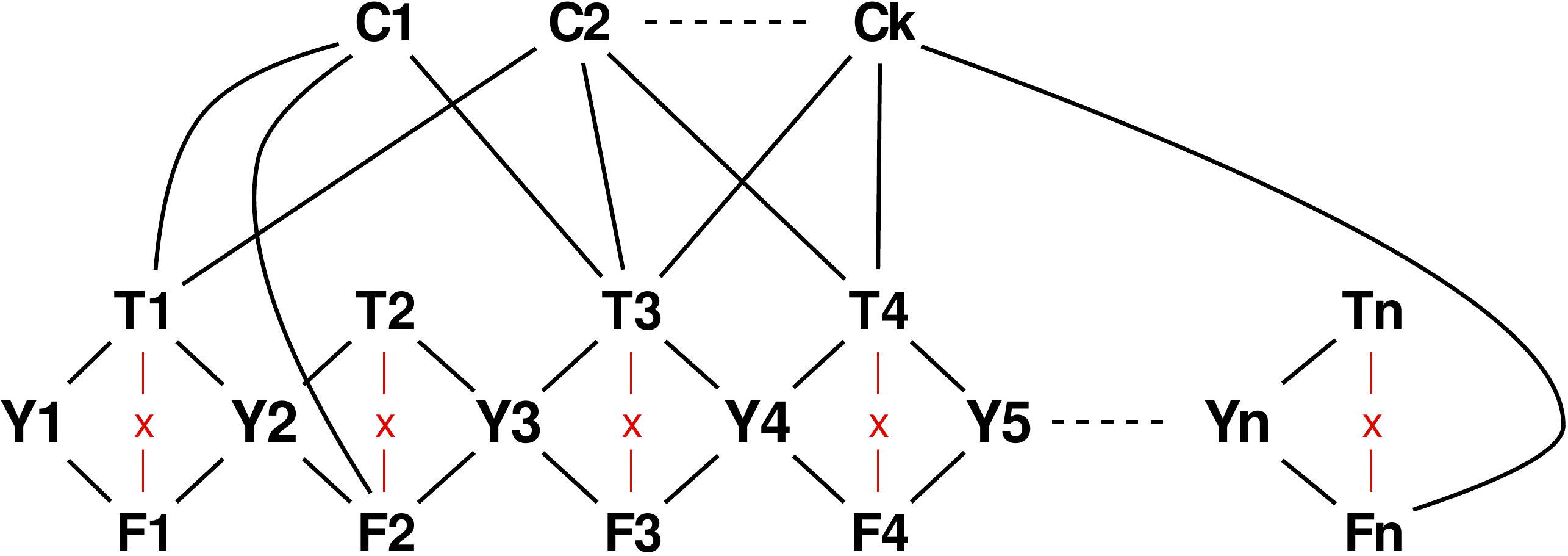}
    \caption{A graph with black and white edges. The crossed red edges between T1-F1, T2-F2 and so on symbolize that there is no white edge among those nodes, all other pairs of nodes (except the ones already connected by a black edge) are connected by a white edge. Computing the lexicographically minimum \bclique containing $X=\{Y1\}$ is NP-hard.}
    \label{fig:comp-hard}
\end{figure}

\begin{lemma}
\label{lem:hardness}
Given a graph $G$ whose edges are either black or white and a non-maximal \bclique $X$ of $G$, it is NP-hard to find the lexicographically minimum among the maximal \bcliques containing $X$.
\end{lemma}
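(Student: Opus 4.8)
The plan is to reduce from a classic NP-hard problem about independent sets or cliques, exploiting the fact that fixing a vertex of the solution and minimizing lexicographically forces a choice that encodes a boolean assignment. A natural candidate is to reduce from the decision version of \textsc{independent set}, or equivalently \textsc{clique}, since \bcliques are cliques in $G$ with a connectivity constraint on the black subgraph $G_B$. The gadget suggested by Figure~\ref{fig:comp-hard} has ``truth'' vertices $T_1,\dots,T_N$ and ``false'' vertices $F_1,\dots,F_N$ for the $N$ variables, plus an anchor vertex $Y_1$ which will be the (only) element of the non-maximal \bclique $X=\{Y_1\}$; $Y_1$ is black-adjacent to everything so that connectivity in $G_B$ never obstructs extension, and the only ``missing'' white edges are between $T_i$ and $F_i$, so that any \bclique contains at most one of $T_i, F_i$ — hence every maximal \bclique picks exactly one of $\{T_i,F_i\}$, i.e.\ encodes a full truth assignment, together with some clause/verification vertices.

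First I would fix the ordering of $\univ$ so that $Y_1$ is smallest, then all the $T_i$'s, then all the $F_i$'s, then the remaining gadget vertices — this way the lexicographically minimum maximal \bclique containing $Y_1$ greedily tries to include as many $T_i$ as possible before resorting to $F_i$, and in particular its restriction to the variable vertices is the lexicographically-largest-in-``true'' assignment that can be completed to a valid global solution. Then I would add, for the target NP-hard instance (say, a SAT formula $\varphi$ or a graph $H$ for \textsc{independent set}), verification vertices whose joint inclusion in a \bclique is possible precisely when the chosen assignment satisfies $\varphi$. The point is that deciding a single bit of the lexicographic minimum — ``is $T_1$ in it?'' — already answers whether there is a satisfying assignment with $x_1 = \text{true}$, which is NP-hard; more directly, computing the whole lexicographically minimum maximal \bclique lets one read off a satisfying assignment (or certify none exists, by checking whether the output even contains the verification vertices / is consistent). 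I would then argue correctness in two directions: (i) every valid satisfying assignment yields a maximal \bclique containing $Y_1$, so the lex-min one exists and its variable part is a satisfying assignment of a canonical form; (ii) conversely the lex-min maximal \bclique, restricted to the variable vertices, must be a satisfying assignment whenever one exists, because the verification vertices can be appended only in that case and maximality forces them in.

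The main obstacle will be engineering the white-edge structure so that two competing constraints are simultaneously enforced: (a) the variable vertices behave like independent boolean choices (exactly one of $T_i,F_i$), and (b) the verification/clause vertices are white-adjacent to all the ``right'' literal vertices and non-adjacent to the ``wrong'' ones, so that a clause vertex can join the clique iff at least one of its literals is selected — while still keeping $G_B$'s induced subgraph on any candidate \bclique connected (which the universal black neighbor $Y_1$, or a small black ``spine'', handles). A secondary subtlety is making sure the lexicographic tie-breaking genuinely forces the encoded assignment: I must place the clause vertices late in the order and verify that no ``cheating'' maximal \bclique — one that omits a clause vertex it could have included — can be lexicographically smaller, which follows because including an available later vertex only extends the set and cannot decrease it lexicographically, combined with the strongly accessible / \bclique-hereditary structure guaranteeing such extensions are always performed in the minimum maximal solution. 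Once these gadget invariants are nailed down, the reduction is polynomial-size and the NP-hardness of computing the lexicographic minimum maximal \bclique containing a fixed $X$ follows; by the same construction one also gets hardness of merely \emph{deciding membership} of a specific vertex in that minimum, which is the sharper statement I would state as a remark.
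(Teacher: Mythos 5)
Your overall shape (reduce from SAT; pairs $T_i,F_i$ with the only missing edge between them so each maximal solution encodes an assignment; an anchor vertex as the non-maximal $X$; clause vertices whose presence certifies satisfaction) matches the paper's construction, but two of your concrete design choices break the reduction. First, your lexicographic priorities are inverted. You place the variable vertices early and the clause vertices late, and then want to read off satisfiability either from a single bit (``is $T_1$ in the lex-min?'') or from whether the clause vertices appear. But every assignment-encoding \bclique containing the anchor extends to \emph{some} maximal \bclique containing the anchor, so with variables ordered first the lexicographic minimum is simply the greedy ``all-$T$-where-possible'' solution: $T_1$ is in it for trivial reasons (any maximal solution containing $T_1$ beats any containing $F_1$ but not $T_1$), and the clause vertices appear only if that particular greedy assignment happens to be satisfying. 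The one-clause formula $(\neg x_1)$ already defeats both tests: the lex-min is $\{Y_1,T_1,\dots\}$ without the clause vertex, yet the formula is satisfiable. The paper's trick is exactly the opposite ordering: the clause vertices $C_1,\dots,C_k$ get the \emph{smallest} labels, so lexicographic minimization itself tries as hard as possible to include all of them, and ``does $\comp(\{Y_1\})$ contain every $C_i$?'' is then equivalent to satisfiability.

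Second, your clause gadget does not encode an OR. Making a clause vertex white-adjacent to the ``right'' literals and non-adjacent to the ``wrong'' ones yields an exclusion (AND-type) constraint — the clause vertex can join iff \emph{no} wrong literal is present — which is not ``at least one literal is selected.'' The OR semantics in the paper comes from \emph{black connectivity}: each $C_i$ is white-adjacent to everything (so clique-ness never blocks it) but black-adjacent only to the literal vertices satisfying clause $d_i$, so it can enter a \bclique anchored at $Y_1$ iff at least one of those literals is already there. Your choice of making $Y_1$ black-adjacent to everything ``so that connectivity never obstructs extension'' destroys precisely this mechanism: with a universal black neighbor, every clause vertex is always addable and the reduction collapses. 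The paper instead uses a black spine $Y_1,\dots,Y_n$ threaded only through the variable gadgets (each $Y_i$ black-adjacent to $T_i,F_i$ and $T_{i-1},F_{i-1}$), keeping the variable side connected under any assignment while leaving the clause vertices reachable in $G_B$ only through their satisfying literals. Until you adopt both fixes — clause vertices first in the label order, and connectivity (not white-edge non-adjacency) as the clause-satisfaction test — the stated invariants cannot be established.
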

\begin{proof}
We prove that a $\comp(X)$ function that returns the lexicographically minimum \bclique containing $X$ can be used to solve a SAT problem in polynomial time, by building a graph with nodes linear in the amount of the clauses and variables in the formula.

Given a SAT formula with $n$ variables $x_1 \ldots x_n$ and $k$ clauses $d_1 \ldots d_k$, we build the the graph in Figure~\ref{fig:comp-hard}, whose nodes are $C1 \ldots Ck$, $T1 \ldots Tn$, $F1 \ldots Fn$ and $Y1\ldots Yn$, labelled increasingly in this order (i.e., nodes $C1 \ldots Ck$ have smaller label than all other nodes). Each $Yi$ is connected with a \emph{black} edge to $Ti$ and $Fi$, and, except for $Y1$, also with $Ti-1$ and $Fi-1$. Each $Ci$, which corresponds to $d_i$, is connected with \emph{black} edge to $Tj$ (resp. $Fj$) whether $d_i$ contains a positive (resp. negative) occurrence of $x_j$. Hence, nodes in $C1\ldots Ck$ are connected with black edge to an arbitrary amount of $Ti$ and $Fi$ nodes, but not to any $Yi$ node. All other pairs of nodes are connected with a \emph{white} edge, except for the pairs $Ti$,$Fi$ (symbolized by the red crossed edge in Figure~\ref{fig:comp-hard}).
It is straightforward to see that any maximal \bclique in this graph will contain exactly one between any pair of nodes $Ti$,$Fi$, and that any maximal \bclique containing all nodes in $C1\ldots Ck$ will be lexicographically smaller than any that does not contain all of them (as they have the smallest labels).

Consider $\comp(\{Y1\})$, i.e., the lexicographically smallest maximal \bclique containing $Y1$. Any \bclique containing $Y1$ and all $Ci$ nodes represents a satisfying assignment for the formula at hand. Indeed, in order for each $Ci$ node to be reachable from $Y1$ with black edges, at least one of the $Tj$ or $Fj$ nodes connected to $Ci$ must be in the \bclique; the set of $Ti$ and $Fi$ nodes in the \bclique will thus give us the value (true or false) of the corresponding variable $x_i$ (recall that we cannot have the pair $Tj$-$Fj$ in the same \bclique).
Hence, in order to verify that the formula is satisfiable, we only need to compute $\comp(\{Y1\})$ and check whether this contains all $Ci$ nodes.
\end{proof}
Lemma~\ref{lem:hardness} represents an obstacle to implement reverse search algorithms using Property~\ref{prop:class-rev-search}.
Furthermore, a polynomial cost per solution cannot be guaranteed in the general case, as per the following lemma.

\begin{lemma}
\label{lemma:conditional_lb}
An algorithm that generates all the $\alpha$ maximal solutions of any strongly accessible set system has a worst case time complexity of $\Omega(\alpha 2^{q/2})$, unless SETH is false. 
\end{lemma}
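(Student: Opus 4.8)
# Proof Proposal for Lemma~\ref{lemma:conditional_lb}

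The plan is to build, from a $k$-SAT formula $\phi$ on $N$ variables, a strongly accessible set system whose maximal solutions encode the assignments of $\phi$ in a way that forces any listing algorithm to spend $\Omega(2^{q/2})$ time \emph{per solution} that it outputs. The guiding idea is the classical Lawler--style ``padding'' trick combined with the SETH lower bound for CNF-SAT: if some algorithm lists all $\alpha$ maximal solutions in time $o(\alpha\,2^{q/2})$, then on an instance with a single maximal solution (or a small constant number of them) it would run in time $2^{o(q)}$, and by choosing the construction so that $q = (1+o(1))N$ we would decide satisfiability of $\phi$ in time $2^{o(N)}$, contradicting SETH.

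The key steps, in order. First, I would fix a ground set $\univ$ of size $\Theta(N)$ containing two ``literal'' elements $t_i, f_i$ for each variable $x_i$, together with a small number of auxiliary ``gadget'' elements (a constant or $O(\log N)$ of them). Second, I would define $\good$ so that: (i) a set $X$ is in $\good$ iff, whenever $X$ already commits to the truth values of some variables, those commitments are consistent and \emph{extend to a satisfying assignment of the sub-formula restricted to the committed clauses} — equivalently, $X\in\good$ iff $X$ is a partial assignment that can be completed to a full satisfying assignment of $\phi$; and (ii) a maximal solution is exactly a full satisfying assignment together with all auxiliary elements, so that $q$, the maximum size of a maximal solution, is $N + O(\log N) = (1+o(1))N$. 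Third, I would verify strong accessibility: if $X\subset Y$ with $X,Y\in\good$, then $Y$ is (an extension of) a satisfying assignment, pick any element $z\in Y\setminus X$; since $Y$ witnesses completability, $X\cup\{z\}$ is still a partial assignment completable to the same satisfying assignment, hence $X\cup\{z\}\in\good$. (Care is needed at the ``literal'' level: adding $t_i$ when $f_i\notin X$ and $x_i=\mathrm{true}$ in $Y$ is fine; the gadget elements are designed never to create a conflict.) Fourth — and this is where the $2^{q/2}$ rather than $2^{q}$ enters — I would arrange the encoding so that the ``half'' that is expensive has size $q/2$: for instance, split the variables into two blocks, let the maximal solutions range over satisfying assignments, and observe that an algorithm must, in the worst case, explore $2^{q/2}$ candidate completions of one block before certifying that a solution is maximal (no $Y\supsetneq X$ exists), because deciding maximality here is exactly deciding a SAT-like query over the remaining block. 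Concretely one plants a single satisfying assignment inside a space of $2^{q/2}$ syntactically plausible partial solutions, using a SETH-hard instance on $q/2 \approx N$ variables so that the search genuinely costs $2^{q/2-o(q)}$. Fifth, I would assemble the contradiction: suppose an algorithm runs in $O(\alpha\, 2^{q/2-\varepsilon q})$ for some $\varepsilon>0$; on the planted instance $\alpha = 1$ (or $\alpha$ is a fixed polynomial from explicit ``dummy'' maximal solutions one can optionally add, à la Lawler, which only multiplies the bound by $\mathrm{poly}$), so the running time is $2^{q/2-\varepsilon q+o(q)} = 2^{(1/2-\varepsilon')N}$ for CNF-SAT on $N$ variables, violating SETH.

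The main obstacle I anticipate is making step four rigorous: getting a genuine \emph{per-solution} lower bound of $2^{q/2}$ rather than merely a one-shot $2^{q}$ decision bound. This requires simultaneously (a) keeping $q$ as close to $N$ as possible (so the exponent $q/2$ is as large as $N/2$ relative to the SETH parameter — here one may need to invoke SETH in the ``sparsification'' form or simply use that $k$-SAT for unbounded $k$ is not in $2^{(1-\delta)N}$), and (b) ensuring the hard search really has magnitude $2^{q/2}$ and not $2^{q}$, which is what pins the final bound to the stated $\Omega(\alpha\, 2^{q/2})$; the natural way is to let only \emph{one} block of $q/2$ literal-elements carry the SETH-hard search while the other block is forced/trivial, so that a maximal solution has $\approx q/2$ ``free'' coordinates and maximality-checking is a SAT query on $\approx q/2$ variables. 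A secondary subtlety is confirming that the constructed system is \emph{strongly accessible} and not merely accessible — i.e. that every element of $Y\setminus X$ (or at least one, which is all the definition demands) can be added — and that $\good$ is recognizable in $\mathrm{poly}(|\univ|)$ time, which it is since membership is ``is this partial assignment consistent with the committed clauses and completable,'' and in the gadget we make ``completable'' decidable in polynomial time by construction (e.g. the hard part only bites at \emph{maximality}-testing, not at membership).
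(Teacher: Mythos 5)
Your high-level instinct (reduce from CNF-SAT via a Lawler-style construction and a timeout argument) is right, but the mechanism you propose in steps two and four cannot work inside this framework. Defining $\good$ as ``partial assignments completable to a satisfying assignment'' makes the membership test itself NP-hard, whereas the framework assumes $X\in\good$ is decidable in $\goodtime=poly(|\univ|)$ time; and your fallback --- making membership easy while letting the SETH-hardness ``bite only at maximality-testing'' --- is impossible for strongly accessible systems. By the very definition of strong accessibility, if $X$ is non-maximal and $Y\supset X$ is a solution, then some \emph{single} element of $Y\setminus X$ can be added to $X$ while staying in $\good$; hence maximality of $X$ is certified by at most $|\univ|$ membership queries. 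There is no way to plant a $2^{q/2}$-cost maximality check, so your step four (``an algorithm must explore $2^{q/2}$ candidate completions before certifying no $Y\supsetneq X$ exists'') fails, and with it the planned source of the $q/2$ exponent.

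The paper's proof gets the exponent from a different place, and more simply: it invokes the existing reduction of Lawler et al., which turns a SAT formula on $n$ variables into an \emph{independence} set system (in particular strongly accessible, with polynomial-time membership) that has $n$ dummy maximal solutions plus one maximal solution per satisfying assignment, and in which the largest maximal set has size $q=2n$. Assuming a lister running in $O(\alpha 2^{q/k})$ time for some $k>2$, one runs it with a time budget of $O(n\cdot 2^{2n/k})$: the formula is unsatisfiable iff the lister halts within the budget having output exactly $n$ solutions. Since $2n/k<n$, this decides CNF-SAT in time $2^{(1-\delta)n}$ for some $\delta>0$, contradicting SETH. In other words, the $2^{q/2}$ in the statement is just $2^{n}$ re-expressed through $q=2n$; no per-solution search of size $2^{q/2}$ is engineered, and the hardness lives in deciding whether any non-dummy maximal solution exists at all (a ``is there one more solution'' question), not in verifying maximality of a solution already found. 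If you want to salvage your write-up, replace your construction of $\good$ by Lawler's (polynomial-time checkable) system, keep the dummies and the timeout argument from your step five, and drop the block-splitting/planting machinery entirely.
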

\begin{proof}
Suppose by contradiction that an algorithm $A$ exists to list all the maximal solutions in an independence set system, which is a particular case of strongly accessible set systems, in $O(\alpha 2^{\frac{q}{k}})$ time for some $k > 2$. This implies the existence of an algorithm $B$ for SAT that runs in $O(n 2^{\frac{2n}{k}})$ time, where $n$ is the number of variables. Indeed, using the reduction in~\cite{lawler1980generating}, $B$ first transforms a formula of SAT in a set system, having $\alpha=n$ maximal solutions iff the formula is not satisfiable. Note that this reduction gives $q=2n$ for the size of the largest maximal set. At this point, $B$ can execute $A$ waiting $O(n\cdot 2^{\frac{2n}{k}})$ time: the formula is not satisfiable iff $A$ finds $n$ solutions and terminates within this upper bound. This contradicts the hypothesis that $k > 2$ as it must be $k \leq 2$ by SETH.
\end{proof}

\subsection{Using reverse search: canonical ordering}
\label{sec:our-framework}

We give a definition of \comp that is different from that in Property~\ref{prop:class-rev-search}, exploiting the definition of strongly accessible set systems: for any $X,Y\in \good$ such that $X\subset Y$, there exists $z\in Y\setminus X$ such that $X \cup\{z\}\in\good$. The idea is to let \comp incrementally add these elements $z$, thus \emph{permuting} the elements of a maximal solution $S$, and obtaining a relative order of the elements of $\univ$ that depends on $S$ itself.

Specifically, we label the elements in $\univ$ arbitrarily as $v_1, \ldots, v_{|\univ|}$, and implicitly enforce the condition $v_i < v_j$ iff $i < j$ after the labeling. From now on, we identify elements with their labels. Given $X\in\good$ and a subset of elements $A \subseteq \univ$, we use $X^+_A=\{a\in A\setminus X : X\cup \{a\}\in\good\}$ to denote the set of elements of $A$ that can be added to $X$. We use $X^+$ as an equivalent form for $X^+_{\univ}$. (Note that the minimum element of these sets is well defined because $\univ$ is labelled.) We use $Z=\{x \in \univ : \{x\}\in \good\}$ to denote the set of good singletons.\footnote{In the case of \bcliques, we have that every node is a \bclique (i.e., $Z=\univ$). However, in strongly accessible set systems, it can be $Z \subset \univ$: consider the set of simple paths that contain a leaf in a tree, where $\univ$ are the nodes of the tree; observe that $Z$ is the set of leaves.} 

The following fact holds.
\begin{fact}
\mbox{}\vspace*{-1.5ex}
\label{obs:zeta}
\begin{enumerate}
\item \label{item:i} $X \cap Z \neq \emptyset$ for any nonempty $X\in \good$.
\item \label{item:ii} $X^+_S \neq \emptyset$ for any $X,S\in\good$ such that $X\subset S$.
\end{enumerate}
\end{fact}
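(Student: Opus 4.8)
The statement has two parts, and both follow quite directly from the defining property of strongly accessible set systems together with the assumption that $\emptyset \in \good$. I would prove them in order, since part~\ref{item:i} is essentially the special case of part~\ref{item:ii} with $X = \emptyset$, but it is cleaner to state it separately as the base case.

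For part~\ref{item:i}: let $X \in \good$ be nonempty. Since $\emptyset \in \good$ and $\emptyset \subset X$, strong accessibility applied to the pair $\emptyset \subset X$ yields an element $z \in X \setminus \emptyset = X$ with $\emptyset \cup \{z\} = \{z\} \in \good$. By definition of $Z$ this means $z \in Z$, so $z \in X \cap Z$ and hence $X \cap Z \neq \emptyset$. (If one wants the full structure, one can iterate: applying strong accessibility repeatedly starting from $\emptyset$ and climbing up to $X$ shows there is a chain $\emptyset \subset \{z\} \subset \cdots \subset X$ of good sets, but a single step already suffices here.)

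For part~\ref{item:ii}: let $X, S \in \good$ with $X \subset S$. This is exactly the hypothesis of the strong accessibility axiom, which gives an element $z \in S \setminus X$ with $X \cup \{z\} \in \good$. Now $z \in S \setminus X$ means $z \in S$ and $z \notin X$, so $z$ is an element of $S$, hence an element of $\univ$, that can be added to $X$; that is, $z \in X^+_{S}$ by the definition $X^+_S = \{a \in S \setminus X : X \cup \{a\} \in \good\}$. Therefore $X^+_S \neq \emptyset$.

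There is no real obstacle here — the fact is an immediate unwinding of the definitions, included for later reference so that $\choice$ and $\comp$ always have a legal element to pick. The only point worth being careful about is the degenerate cases: that $\emptyset \in \good$ is indeed assumed in the paper (stated in the clique example and reused throughout), which is what makes part~\ref{item:i} go through, and that $X \subset S$ is strict inclusion, so $S \setminus X \neq \emptyset$ and the axiom is actually applicable in part~\ref{item:ii}.
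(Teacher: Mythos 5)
Your proof is correct and follows essentially the same route as the paper: part~\ref{item:i} is obtained by applying the strong accessibility condition to the pair $\emptyset \subset X$ (using $\emptyset \in \good$), and part~\ref{item:ii} is a direct unwinding of the definition applied to $X \subset S$, yielding an element of $X^+_S$. Nothing is missing; the remarks about strict inclusion and $\emptyset \in \good$ match the assumptions the paper relies on.
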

\begin{proof}
Fact~\ref{obs:zeta}.\ref{item:i} holds as $Y=\emptyset \subset X$ is in $\good$ implies that there exists $z\in X\setminus Y$ such that $Y\cup \{z\}=\{z\} \in \good$ by definition of strongly accessible set systems. Hence, $z\in Z\cap X$ meaning that $Z\neq\emptyset$ if a nonempty solution exists. Fact~\ref{obs:zeta}.\ref{item:ii} directly follows from the definition of strongly accessible set system.
\end{proof}

\begin{algorithm}[t]
\caption{Basic framework for strongly accessible set systems}
\label{alg:main}
\small
\begin{minipage}[t]{0.38\textwidth}
\SetKwInOut{Input}{Input}
\SetKwInOut{Output}{Output}
\nonl
\Input{Strongly accessible $(\univ,\good)$}
\Output{All maximal $X\in\good$}
\DontPrintSemicolon
\LinesNotNumbered
\BlankLine
\nonl 
\nonl\For{$R\in\roots$}{
    \nonl $\rec( R )$
}
\BlankLine
\SetKwProg{myproc}{Function}{}{}
  \myproc{$\comp(X,A)$}{
    \While{$X^+_A \neq \emptyset$}{
    	$x\gets \choice(X,A)$\;
        $X\gets X\cup \{x\}$\;
    }
    \Return{X}   
  }
\BlankLine
\SetKwProg{myproc}{Function}{}{}
  \myproc{$\choice(X,A)$}{
  \Return{$\min X^+_A$}   
  }
\end{minipage}~\begin{minipage}[t]{0.62\textwidth}
\SetKw{Yield}{yield return}
\DontPrintSemicolon
\LinesNotNumbered
\SetKwProg{myproc}{Function}{}{}
  \nonl \myproc{$\rec(X)$}{
  \textbackslash* Output $X$ if depth is odd *\textbackslash\;
  \ForEach{$w\in \univ$}{
    \ForEach{$S\in \children(X,w)$}{
    		\rec($S$)
    }
  }
  \textbackslash* Output $X$ if depth is even *\textbackslash
}
\BlankLine
\myproc{$\children(P, w)$}{
        \ForEach{$X\subset P\setminus\{w\}$ such that $X \cup \{w\} \in \good$}{
            $S\gets \comp(X\cup \{w\}, \univ)$\;
            
            \If{$\langle \parent(S), \parind(S), \core(S)\rangle = \langle P,w,X\rangle$}{
                \Yield{$S$}
            }
        }
    }
\end{minipage}
\end{algorithm}

Given $X \in \good$, we denote by $\source(X)$ the smallest element in $X\cap Z$, which exists by Fact~\ref{obs:zeta}. Given a set $A\subseteq \univ$, we define $\comp(X,A)$ as a procedure that uses Fact~\ref{obs:zeta}: while $X^+_A$ is nonempty, in each iteration an element $x \in X^+_A$ is selected by a user-defined black box $\choice(X,A)$, and $x$ is added to $X$ (see Algorithm~\ref{alg:main} where $\choice(X,A)$ returns the minimum). Different implementations of \choice lead to different results for our framework. We use $\comp(X)$ as a shorthand for $\comp(X,\univ)$.

Let us consider $\comp(\{\source(S)\},S)$: it naturally induces a permutation of the elements in $S$, called \emph{canonical order} $s_1,\ldots, s_{|S|}\in S$, where $\source(S)$ is the first element $s_1$, and each subsequent element $s_i$ is chosen during the iterations ($i=2,3, \ldots, |S|$) of $\comp$ according to $\choice$. This differs from Section~\ref{sec:classical-reverse-search}, where the increasing order of $S$ is implicitly adopted.

\begin{example}
In the case of the \bcliques in Figure~\ref{fig:bccliques}, let $X$ be $\{1,2\}$, $X^+$ is the set $\{5,6\}$ as both the nodes $5$ and $6$ can be added to $X$ while maintaining the \bclique property ($3$ is not in $X^+$ as it is not black connected to $X$). $\comp(X)$ adds to $X$ the node $5$, since $5<6$; it then adds $3$, as $X^+=\{3,6\}$, and finally $6$, as $X^+=\{6\}$. When $X=\{1,2,3,5,6\}$, $X^+=\emptyset$, as $X$ cannot be enlarged anymore maintaining the \bclique property. A visual representation of the corresponding canonical order is shown top-down in Figure~\ref{fig:clq-forms}(b). 
\qexd
\end{example}

Note that $\comp(X)$ may not return the lexicographically smallest \bclique containing $X$, but we will show that its properties allow us to define a suitable \parent function.

Given $S$ and $0 < j\leq |S|$, we define $S[j]= s_1,\ldots, s_{j}$ as the $j$th prefix of the canonical order of $S$. We observe that the following prefix-closure property holds for any \choice.

\begin{lemma}
\label{prop:pref-good}
For any maximal solution $S\in \good$, we have $S[j]\in \good$ for any $1\leq j\leq |S|$.
\end{lemma}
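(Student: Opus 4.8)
The statement asserts that every prefix of the canonical order of a maximal solution $S$ is itself in $\good$. The plan is a straightforward induction on the prefix length $j$, driven directly by how $\comp(\{\source(S)\},S)$ builds up $S$ element by element.

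\emph{Base case.} For $j=1$, the prefix is $S[1]=s_1=\source(S)$, which by definition is the smallest element of $S\cap Z$. Since $Z=\{x\in\univ:\{x\}\in\good\}$, we have $\{s_1\}\in\good$ immediately. (Fact~\ref{obs:zeta}.\ref{item:i} guarantees $S\cap Z\neq\emptyset$, so $\source(S)$ is well defined, using that $S$ is nonempty — which holds because $\good$ is assumed nonempty and $S$ is maximal, hence $S\supseteq$ some singleton.)

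\emph{Inductive step.} Suppose $S[j]\in\good$ for some $1\le j<|S|$. By construction of the canonical order, the element $s_{j+1}$ is precisely the element selected by $\choice(S[j],S)$ during the $(j{+}1)$-st iteration of $\comp(\{\source(S)\},S)$. By the specification of $\choice$, it returns an element of $S[j]^+_S$, i.e. an element $a\in S\setminus S[j]$ with $S[j]\cup\{a\}\in\good$. Hence $S[j+1]=S[j]\cup\{s_{j+1}\}\in\good$. This completes the induction, so $S[j]\in\good$ for all $1\le j\le|S|$.

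\emph{Where the subtlety lies.} There is essentially no hard step here; the only thing to be careful about is that the argument must not depend on which implementation of $\choice$ is used — the claim is stated "for any $\choice$" — so the proof may only invoke the one structural guarantee every $\choice$ must satisfy, namely that $\choice(X,A)\in X^+_A$ (equivalently, that each iteration of $\comp$ adds an element $x$ with $X\cup\{x\}\in\good$, which is exactly the loop invariant of $\comp$). One should also note that the process is well defined all the way up to $j=|S|$: because $S$ is maximal and $S[j]\subsetneq S$, Fact~\ref{obs:zeta}.\ref{item:ii} gives $S[j]^+_S\neq\emptyset$, so $\comp$ does not halt prematurely and every element of $S$ is eventually consumed — this is what ensures the canonical order really is a permutation of all of $S$, so that the statement ranges over a genuine prefix decomposition. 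Beyond this bookkeeping, the proof is immediate from the definition of strong accessibility embedded in the $\comp$ routine.
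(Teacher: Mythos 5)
Your proof is correct and coincides with the paper's own argument: induction on the prefix length, with the base case $S[1]=\{\source(S)\}\in\good$ following from the definition of $Z$, and the inductive step following because $s_{j+1}\in S[j]^+_S$ by the construction of \comp. The extra remarks (that any \choice suffices and that Fact~\ref{obs:zeta} guarantees the canonical order exhausts $S$) are sound bookkeeping that the paper leaves implicit.
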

\begin{proof}
We proceed by induction. $S[1]=\{s_1\}=\{\source(S)\}\in \good$, by definition of $Z$ as $s_1\in S\cap Z$. Assuming that $S[j]\in \good$, we have that also $S[j+1]$, which is obtained from $S[j]\cup \{s_{j+1}\}$, is in $\good$ by definition of $S[j]^+_S$, as $s_{j+1}\in S[j]^+_S$. 
\end{proof}

Along with \choice, our framework also requires a user-defined total ordering\footnote{We say that $S\prec T$ for any two maximal solutions $S$ and $T$ when $S\preceq T$ and $S\neq T$.} $\preceq$ between maximal solutions that satisfies the following constraint, needed to satisfy Definition~\ref{prop:rev-search}.\ref{rev-b}.

\begin{requirement}
\label{prop:pref-min}
For any maximal solution $S\in \good$, $\comp(S[j])\preceq S$ for any $j\leq |S|$.
\end{requirement}

\begin{lemma}
\label{lem:comp-strongly}
Algorithm~\ref{alg:main} fulfills Requirement~\ref{prop:pref-min}, where $\preceq$ is defined by lexicographically comparing the canonical orders of maximal solutions.\footnote{Given any two maximal solutions $S \neq T$, consider their canonical orders $s_1,\ldots,s_{|S|}$ and $t_1,\ldots,t_{|T|}$. Let $i$ be the smallest index for which $s_i \not= t_i$, which exists by their maximality. Then, $S\prec T$ iff $s_i < t_i$.}
\end{lemma}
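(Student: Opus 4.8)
The goal is to show that with $\comp$ and $\choice$ as defined in Algorithm~\ref{alg:main} (i.e.\ $\choice(X,A)=\min X^+_A$), and with $\preceq$ being the lexicographic comparison of canonical orders, Requirement~\ref{prop:pref-min} holds: $\comp(S[j])\preceq S$ for every maximal solution $S$ and every $j\le|S|$. I would fix a maximal solution $S$ with canonical order $s_1,\dots,s_{|S|}$ and an index $j\le|S|$, and set $T=\comp(S[j],\univ)$. By Lemma~\ref{prop:pref-good}, $S[j]\in\good$, so $\comp$ is well-defined on it; $T$ is a maximal solution containing $S[j]$. Let $t_1,\dots,t_{|T|}$ be the canonical order of $T$. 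The plan is to prove that $t_i=s_i$ for all $i\le j$, and then argue that at the first position where the two canonical orders differ (if any), $T$'s element is strictly smaller, so $T\preceq S$.

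The first step is the claim that $t_i = s_i$ for $i\le j$. I would prove this by induction on $i$, exploiting that both $S$ and $T$ are built by the \emph{same} deterministic rule $\choice(X,A)=\min X^+_A$ starting from the same singleton. For the base case: $\source(T)$ is the smallest element of $T\cap Z$; since $S[j]\subseteq T$ and $s_1=\source(S)\in S[j]\cap Z$, we have $s_1\in T\cap Z$, and I must check $s_1$ is actually the \emph{smallest} such — here I would use that $\comp$ on input $S[j]$ never removes elements and $S[j]$'s own smallest $Z$-element is $s_1=\source(S)$, together with the fact that any element added by $\comp$ is chosen as a minimum of an ``addable'' set, so one needs to confirm no smaller $Z$-element sneaks into $T$. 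Actually the cleaner route: $\comp(S[j],\univ)$ is literally run as $\comp(\{s_1\},\dots)$ extended — no, it is run on the whole set $S[j]$ at once. So I should instead observe directly that the canonical order of $T$ is $\comp(\{\source(T)\},T)$, and argue $\source(T)=s_1$ because $T\supseteq S[j]\ni s_1$ and $s_1=\min(S[j]\cap Z)$, while every element of $T\setminus S[j]$ that lies in $Z$ was added by $\choice$ as $\min X^+_A$ at some stage where $s_1$ (being already present) was not available to be re-added but also was never excluded from being smaller — the key point being that $s_1$ was already in the set from the start, so it cannot be that some $z<s_1$ with $z\in Z$ ended up in $T$ unless $\comp$ added it, but $\comp$ only adds elements that keep the set good and there is nothing preventing that. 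Hence I realize the base case genuinely requires the structure: I will use that $T\supseteq S[j]$ and that in the run computing $T$'s canonical order, $\choice$ at step $1$ returns $\min(T\cap Z)$; since $s_1\in T\cap Z$, $t_1=\min(T\cap Z)\le s_1$, and for the reverse inequality I use Lemma~\ref{prop:pref-good} applied to $T$ giving $\{t_1\}\in\good$, plus $t_1\in T$ and... this is circular, so the honest argument is: the inductive hypothesis should be strengthened to ``$S[i]\subseteq T$ and $t_k=s_k$ for $k\le i$'', and the step uses that $\choice(S[i],\univ)$ in the run producing $T$ sees $S[i]^+_{\univ}$, whose minimum is $s_{i+1}$ precisely because the run producing $S$'s canonical order took $\min S[i]^+_S$ — and here I must be careful that $S[i]^+_{\univ}$ vs $S[i]^+_S$ may differ. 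This is the crux and the main obstacle.

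The main obstacle, then, is reconciling ``$\min$ over $\univ$'' (used by $\comp$ when building $T$) with ``$\min$ over $S$'' (implicit in $S$'s canonical order, since $S$'s canonical order is $\comp(\{\source(S)\},S)$ restricting the universe to $S$). I expect the resolution to be: even though $S[i]^+_{\univ}\supseteq S[i]^+_S$, the lexicographic conclusion still goes through, because if at step $i+1$ the run producing $T$ picks $t_{i+1}=\min S[i]^+_{\univ}$ which could be \emph{smaller} than $s_{i+1}=\min S[i]^+_S$ (when $i+1>j$, so we are past the forced prefix), that only makes $T$ lexicographically smaller, which is exactly what we want. So I would split: for $i<j$, $s_{i+1}\in S[i]^+_S\subseteq S[i]^+_{\univ}$ and $s_{i+1}$ is forced to remain available (it is in $S[j]\subseteq T$ and not yet added), but a smaller element of $S[i]^+_{\univ}$ might exist — however $\comp$ would then add it, and it would also be in $T\supseteq S[j]$... no: $\comp(S[j],\cdot)$ starts from all of $S[j]$ already present, so elements of $S[j]$ are never ``chosen'', and $t_1,\dots,t_j$ is the canonical order of the \emph{fixed} set $S[j]$, which by definition of canonical order is $\comp(\{\source(S[j])\},S[j])$; since $\source(S[j])=s_1$ and $S[j]=S[j]$, and the canonical order of a set depends only on the set, we get $t_i=s_i$ for $i\le j$ immediately — this is the clean argument. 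So the base/step induction is really just ``the canonical order of $S[j]$ as a standalone good set agrees with the first $j$ entries of the canonical order of $S$,'' which follows because $\comp(\{s_1\},S[j])$ and $\comp(\{s_1\},S)$ make identical $\choice$ decisions for the first $j$ steps (at step $k\le j$, $S[k-1]^+_{S[j]}=S[k-1]^+_S\cap S[j]$, and its minimum is $s_k\in S[j]$, matching). Once $t_i=s_i$ for $i\le j$ is established, I conclude: either $|T|=|S|$ and all $t_i=s_i$, giving $T=S$ hence $T\preceq S$; or they first differ at some position $m>j$, and since $T$'s canonical order agrees with $S$'s up to $j$ and $T\supseteq S[j]$, I invoke the definition of $\preceq$ (lexicographic on canonical orders) — but I still owe an argument that $t_m<s_m$ rather than $t_m>s_m$. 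For that I would argue $t_m=\min S[m-1]^+_{\univ}$ and $s_m=\min S[m-1]^+_S$ with $S[m-1]^+_S\subseteq S[m-1]^+_{\univ}$, hence $t_m\le s_m$, and $t_m\ne s_m$ at the first point of difference gives $t_m<s_m$, so $T\prec S$. This last inequality $S[m-1]^+_S\subseteq S[m-1]^+_{\univ}$ is immediate from the definitions. I would present the write-up in this order: (1) canonical order depends only on the set and on $\choice$; (2) hence $T$'s canonical order has prefix $s_1,\dots,s_j$; (3) $t_m\le s_m$ at the first difference via the $\subseteq$ of addable sets; (4) conclude $T\preceq S$.
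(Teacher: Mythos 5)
Your skeleton is the paper's (compare the canonical orders of $T=\comp(S[j])$ and $S$ at their first point of difference), but the two justifications you finally settle on do not hold up. Step (2) --- that the canonical order of $T$ has prefix $s_1,\dots,s_j$ because it ``is the canonical order of the fixed set $S[j]$'' --- is false in general: the canonical order of $T$ is by definition the run $\comp(\{\source(T)\},T)$, which starts from $\min(T\cap Z)$ and minimizes \emph{inside $T$}; it does not start from $S[j]$ ``already present'', and it may well begin with elements of $T\setminus S[j]$ of smaller label. Concretely, for cliques on vertices $1,\dots,4$ with edges $\{1,2\},\{2,3\},\{2,4\},\{3,4\}$, take $S=\{2,3,4\}$ (canonical order $2,3,4$) and $j=1$: then $T=\comp(\{2\})=\{1,2\}$ has canonical order $1,2$, whose first entry is not $s_1$. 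You are conflating the run $\comp(S[j],\univ)$ that \emph{constructs} $T$ with the canonical order of $T$, which is recomputed from $\source(T)$. The statement you actually need here is different and easy: if the first difference occurs at a position $i\le j$, it is automatically favorable, because $s_i\in S[j]\subseteq T$ and $S[i]\in\good$ (Lemma~\ref{prop:pref-good}) give $s_i\in T[i-1]^+_T$, hence $t_i=\min T[i-1]^+_T< s_i$ and $T\prec S$. Your write-up does not contain this case, because step (2) ``proves'' it cannot arise.

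Step (3) has the same root problem: you assert $t_m=\min S[m-1]^+_{\univ}$, but by definition $t_m=\min T[m-1]^+_T$, and $\min S[m-1]^+_{\univ}$ need not even belong to $T$ (in the variant with edges $\{3,4\},\{3,5\},\{4,5\},\{2,3\}$ and an isolated vertex $1$, taking $S=\{3,4,5\}$, $j=1$ gives $T=\{2,3\}$ with $t_1=2$, while $\min S[0]^+_{\univ}=\min Z=1$). So the inequality $t_m\le s_m$ is not obtained by comparing minima over $\univ$ and over $S$; for the genuinely delicate case $m>j$ it requires the missing ingredient: show by induction that, as long as the two canonical orders agree, the constructing run $\comp(S[j],\univ)$ passes exactly through the prefixes $S[j+1],S[j+2],\dots$ --- at prefix $S[k]$ the run adds $x=\min S[k]^+_{\univ}$, which therefore lies in $T$ and is addable to $S[k]$, forcing $t_{k+1}=\min S[k]^+_T=x\le s_{k+1}$, and agreement at position $k+1$ forces $x=s_{k+1}$, so the run reaches $S[k+1]$. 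With that run-tracking argument plus the $i\le j$ case above, your conclusion $T\preceq S$ goes through; as written, steps (2) and (3) would fail. (The paper's own proof is terse on exactly these points, but it does not rely on your ``standalone canonical order of $S[j]$'' route.)
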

\begin{proof}
If $\comp(S[j]) = S$, we have nothing to prove. Otherwise, let the canonical order of $\comp(S[j])$ and $S$ be respectively $t_1,\ldots,t_{|T|}$ and  $s_1,\ldots,s_{|S|}$ and let $i > j$ be the smallest index such that $t_i\neq s_i$. We have $S[i-1]=T[i-1]$, so $t_i$ is the minimum of $S[i-1]^+$ and, since $s_i\in S[i-1]^+$, we have $t_i\leq s_i$.
\end{proof}

\begin{example}
\label{ex:can-ord}
In the case of \bcliques, refer to Figure~\ref{fig:bccliques}: we can see that the \bcliques $S_1=\{1,2,3,5,6\}$ and $S_2=\{3,4,5\}$ have respectively $\langle 1,2,5,3,6 \rangle$ and $\langle 3,5,4 \rangle$ as canonical orders. By comparing the two sequences we see that $\langle 1,2,5,3,6 \rangle \prec \langle 3,5,4 \rangle$ as $1<3$.
\qexd
\end{example}

Continuing the description of our framework, we can now define the notions required for the reverse search (Definition~\ref{prop:rev-search}).

\begin{definition}
\label{def:parent}
Given a maximal solution $S$, define $\parind(S)$ as the earliest element $s_j$ in the canonical order of $S$ such that $\comp(S[j])=S$: if $j>1$, then $\core(S) = S[j-1]$, and $\parent(S) = \comp(\core(S))$; otherwise, $S$ has no parent and thus $S\in\roots$.
\end{definition}

\begin{example}
Consider the \bcliques $S_1=\{1,2,3,5,6\}$ and $S_2=\{3,4,5\}$ in Figure~\ref{fig:bccliques}. Their canonical order is respectively $\langle 1,2,5,3,6 \rangle$ and $\langle 3,5,4 \rangle$. Since $\comp(\{1\})=S_1$, we have $S_1\in\roots$. For $S_2$, since both $\comp(\{3\})$ and $\comp(\{3,5\})$ give $S_1$, $\parind(S_2)=4$ and $\core(S_2)=\{3,5\}$. As $\comp(\core(S_2))=S_1$ we have that the parent of $S_2$ is $S_1$. It is worth observing that $S_1=\parent(S_2)\prec S_2$ (see Example~\ref{ex:can-ord}).
\qexd
\end{example}

By Requirement~\ref{prop:pref-min}, we have $\parent(S)\prec S$ as $\parent(S) = \comp(\core(S)) \neq S$, and thus it satisfies Definition~\ref{prop:rev-search}.\ref{rev-b}. 
From Definition~\ref{def:parent}, we have that $S$ is a root iff  $\comp(S[1])=S$, thus satisfying Definition~\ref{prop:rev-search}.\ref{rev-c}.
As for Definition~\ref{prop:rev-search}.\ref{rev-d}, we observe that $\children(S)$ is defined as all the solutions $S'$ for which $\parent(S')=S$. 

\begin{theorem}
\label{the:main-strongly}
Given a total order $\preceq$ among maximal solutions and a $\choice$ function such that Requirement~\ref{prop:pref-min} is fulfilled, and given $\children(S)$, any strongly accessible set system is reverse searchable.
\label{lem:all}
\end{theorem}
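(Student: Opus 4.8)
The plan is to verify that the three conditions of Definition~\ref{prop:rev-search} hold, using the machinery already developed in Section~\ref{sec:our-framework}. The \parent function is the one from Definition~\ref{def:parent}, built from the black-box \choice, and the compatible order is the lexicographic order $\preceq$ on canonical orders. Two of the three conditions are essentially already discharged in the paragraphs immediately preceding the statement, so the proof is mostly a matter of assembling those observations and checking the remaining one.

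First I would handle condition~\ref{rev-b}. The function $\parent(S)$ is well-defined and computable: given $S$, we compute its canonical order via $\comp(\{\source(S)\}, S)$ (which only requires membership tests in $\good$ and the \choice black box), then find the earliest index $j$ with $\comp(S[j]) = S$ to obtain $\parind(S)$ and $\core(S)$, and set $\parent(S) = \comp(\core(S))$. For the order, I take $\preceq$ to be the lexicographic comparison of canonical orders as in Lemma~\ref{lem:comp-strongly}; this is a total order on maximal solutions because any two distinct maximal solutions differ at some position of their canonical orders (maximality guarantees neither is a prefix of the other). Compatibility, i.e. $\parent(S) \prec S$ whenever $\parent(S)$ is defined, follows from Requirement~\ref{prop:pref-min}: we have $\parent(S) = \comp(\core(S)) = \comp(S[j-1]) \preceq S$, and by the minimality in the choice of $\parind(S)$ we know $\comp(S[j-1]) \neq S$, so the inequality is strict. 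Next, condition~\ref{rev-c}: by Definition~\ref{def:parent}, $S \in \roots$ exactly when $\comp(S[1]) = \comp(\{\source(S)\}) = S$, which is a single computation, so \roots is determined by the knowledge of $(\univ, \good)$ — concretely one can enumerate candidate roots by iterating over $z \in Z$ and checking whether $\comp(\{z\}) $ is a root with $\source$ equal to $z$.

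Condition~\ref{rev-d} is the one that needs the most care, and is the part explicitly carried by the $\children$ routine in Algorithm~\ref{alg:main}. I would argue that $\children(S) = \{S' : \parent(S') = S\}$ is correctly computed as $\bigcup_{w \in \univ} \children(S, w)$, where $\children(P,w)$ loops over all $X \subseteq P \setminus \{w\}$ with $X \cup \{w\} \in \good$, forms $S' = \comp(X \cup \{w\})$, and keeps $S'$ precisely when $\langle \parent(S'), \parind(S'), \core(S') \rangle = \langle P, w, X \rangle$. Soundness is immediate from the guard. For completeness, suppose $\parent(S') = P$: then by Definition~\ref{def:parent} there is an index $j = \parind(S') > 1$ with $\core(S') = S'[j-1]$ and $\parent(S') = \comp(S'[j-1]) = P$. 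Setting $w = s'_j$ (the $j$-th element of the canonical order of $S'$) and $X = \core(S') = S'[j-1]$, we have $X \subseteq P$ — indeed $X \subseteq S'[j-1]$ and $\comp(S'[j-1]) = P$, and $X \subseteq P$ because the prefix $S'[j-1]$ is preserved by \comp (the canonical order of $\comp(S'[j-1])$ extends $S'[j-1]$ by Lemma~\ref{prop:pref-good}-style reasoning on how \comp proceeds) — and $w \notin X$ since $w = s'_j$ comes after the prefix $S'[j-1]$ in the canonical order. Moreover $X \cup \{w\} = S'[j]$, so $\comp(X \cup \{w\}) = \comp(S'[j]) = S'$ because $j$ is the earliest index making \comp recover $S'$ and in fact every later prefix also recovers $S'$ once $j$ does; hence $S'$ is produced by the call $\children(P,w)$ and passes the guard. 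Thus $\children(S)$ is exactly the set of children, and it is computable from $S$ given $(\univ, \good)$.

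With all three conditions of Definition~\ref{prop:rev-search} verified, $(\univ, \good)$ is reverse searchable, and Lemma~\ref{lem:basic} then gives the enumeration. The main obstacle is the completeness direction of condition~\ref{rev-d}: one must be careful that the decomposition $S' = \comp(X \cup \{w\})$ with $X = \core(S')$, $w = \parind(S')$-th element faithfully reconstructs the recorded triple, in particular that $\comp$ applied to any prefix $S'[i]$ with $i \geq \parind(S')$ returns $S'$, so that the guard $\langle \parent(S'), \parind(S'), \core(S')\rangle = \langle P, w, X\rangle$ is met with the right values and no child is missed or double-counted. This is exactly why $\parind$ is defined as the \emph{earliest} such index — it canonically pins down a unique $(w, X)$ pair for each child, which is what makes the forest well-defined and the guard both sound and complete.
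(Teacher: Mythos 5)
Your proof is correct and follows essentially the same route as the paper: conditions (A) and (B) of Definition~\ref{prop:rev-search} via Requirement~\ref{prop:pref-min} and Definition~\ref{def:parent}, with $\children(S)$ handling condition (C). Note that the theorem takes $\children(S)$ as a hypothesis, so your detailed completeness argument for Algorithm~\ref{alg:main}'s $\children(P,w)$ guard is not needed here in the paper's structure — it corresponds to the paper's subsequent Lemma~\ref{lem:strongly-children} and discussion, which your argument correctly reproduces (only the needed containment is $\core(S')\subseteq P$ as a set, since $\comp$ only adds elements; the canonical order of $P$ need not extend that prefix).
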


We now discuss how to design $\children(S)$. Let $\children(S,w)$ be the function that returns all the maximal solutions $S'\in\children(S)$ such that $w=\parind(S')$. Clearly, $\children(S)=\bigcup_{w\in \univ} \children(S,w)$, and the union is disjoint since a solution has a unique $\parind$ (Definition~\ref{def:parent}).

With reference to Algorithm~\ref{alg:main}, we show how to implement $\children(P,w)$ given a maximal solution $P$ (the parent) and one element $w\in \univ$. 

\begin{lemma}
\label{lem:strongly-children}
For any two maximal solutions $S$ and $P$, such that $P=\parent(S)$, there exists $X\subset P$ and $w\not\in X$ such that $S=\comp(X\cup \{w\})$, where $X=\core(S)$ and $w=\parind(S)$.
\end{lemma}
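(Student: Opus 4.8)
The plan is to unwind the definitions in Definition~\ref{def:parent} applied to $S$, and simply read off the witnesses $X$ and $w$ directly from them. Concretely, since $P = \parent(S)$ is defined (i.e. $S$ is not a root), Definition~\ref{def:parent} tells us that $\parind(S) = s_j$ for some $j > 1$, that $\core(S) = S[j-1]$, and that $\parent(S) = \comp(\core(S))$. So I would set $X := \core(S) = S[j-1]$ and $w := \parind(S) = s_j$.

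With these choices the three required properties are immediate. First, $w \notin X$: by Lemma~\ref{prop:pref-good} the prefix $S[j-1]$ is a proper prefix of the canonical order of $S$, so it consists of the first $j-1$ elements $s_1,\dots,s_{j-1}$, and since the canonical order is a permutation (its elements are distinct), $s_j \notin \{s_1,\dots,s_{j-1}\} = X$. Second, $X \subset P$: we have $P = \parent(S) = \comp(\core(S)) = \comp(X)$, and by construction $\comp$ only ever adds elements to its argument, so $X \subseteq \comp(X) = P$; the inclusion is proper because $\comp(X) = P \neq S$ while $X \subset S$ (so $X \neq P$ since they sit differently relative to $S$ — more carefully, $X = S[j-1] \neq P$ because $P$ is maximal and equals $S$ only if $j=1$, contradicting $j>1$; in any case $X \subsetneq P$). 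Third, $S = \comp(X \cup \{w\})$: note $X \cup \{w\} = S[j-1] \cup \{s_j\} = S[j]$, the $j$-th prefix of the canonical order of $S$, and by the very definition of $\parind(S)$ as the \emph{earliest} element $s_j$ with $\comp(S[j]) = S$, we have $\comp(S[j]) = S$. Hence $S = \comp(S[j]) = \comp(X \cup \{w\})$.

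The one point that needs a little care — and the closest thing to an obstacle — is verifying $X \subsetneq P$ rather than merely $X \subseteq P$: one must check that $X = S[j-1]$ is genuinely different from the maximal solution $P = \comp(X)$. This follows since $\comp(X) = P \neq S \supsetneq S[j-1] = X$, so if $X$ equalled $P$ it would be a maximal solution, but it is a proper subset of the solution $S$ and $X \neq S$, and $\comp(X) \neq S$ forces $\comp$ to strictly extend $X$, so $X \neq \comp(X) = P$. Everything else is a direct substitution of the definitions, so the lemma is essentially a bookkeeping consequence of Definition~\ref{def:parent} together with the prefix-closure Lemma~\ref{prop:pref-good} and the monotonicity of $\comp$.
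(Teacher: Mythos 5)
Your proposal is correct and follows essentially the same route as the paper's proof: unwind Definition~\ref{def:parent}, take $X=\core(S)$ and $w=\parind(S)$, observe that $\comp$ only adds elements so $\core(S)\subseteq\comp(\core(S))=P$, that $\parind(S)\notin\core(S)$ since the canonical order has distinct elements, and that $S=\comp(S[j])=\comp(X\cup\{w\})$ by the very definition of $\parind$. Your extra check that the inclusion $X\subset P$ is strict (via $X\subsetneq S$ and maximality of $P$) is a detail the paper leaves implicit, but it does not change the argument.
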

\begin{proof}
According to Definition~\ref{def:parent}, given a solution $S$, whose canonical order is $s_1,\ldots,s_{|S|}$, the relationship between $S$ and $P=\parent(S)$ is such that $P=\comp(\core(S))$, where $\core(S)=s_1,\ldots, s_j$ for some $j$. For each $i$, with $1\leq i\leq j$, $s_i\in S\cap P$; $s_{j+1}=\parind(S)$ is in $S\setminus \core(S)$.
Note that given $S$ and $P=\parent(S)$, we have $\core(S)\subset P$, $\parind(S)\not\in \core(S)$, and $S=\comp(\core(S)\cup \{\parind(S)\})$. 
\end{proof}

Lemma~\ref{lem:strongly-children}, essentially says that we can exploit algorithmically the inclusion $\children(P,w)\subseteq \{\comp(X\cup\{w\}) : X\subset P ,\ X\cup\{w\}\in\good \}$.
It is worth observing that while $\core(S)$ is a prefix of $S$, it is not necessarily also a prefix of $P$ as the canonical order of $P$ could differ (including its $\source$). This implies that we do not know a priori which subsets $X \subset P$ may be the \core of some child of $P$. The only hint we have is that $X\cup\{w\}\in\good$ as its elements must be a prefix of $S$ when put into canonical order.\footnote{This is the case, for instance of the \bcliques in Figure~\ref{fig:bccliques}: we can see that the parent of \bclique $\langle 2,5,8,7 \rangle$ is the \bclique $\langle 1,2,5,3,6 \rangle$, its core is $\langle 2,5 \rangle$ and its parent index is the node $8$.}

$\children(P,w)$ computes candidate children $S$ as $\comp(X\cup\{w\})$, for each possible $X\in 2^P$ such that $X\cup\{w\}\in\good$: it retains $S$ if $P=\parent(S)$, $w=\parind(S)$, and $X=\core(S)$. While the first two conditions are sufficient to ensure that $S\in \children(P,w)$, the third one is required to output each child once: if we find an $S$ such that $P=\parent(S)$, $w=\parind(S)$ but $X\neq \core(S)$, we discard it as surely we will find $S$ again when $X = \core(S)$. 

\begin{example}
Consider the case in Figure~\ref{fig:bccliques}. For the sake of simplicity we will report the elements of a solution in the canonical order. When $P=\{3,5,4\}$, $X = \{5\}$ and $w = 8$, the candidate child $S$ in Algorithm~\ref{alg:main} is $\comp(\{ 5,8 \} ) = \{2,5,8,7\}$. However, $\parent(2,5,8,7) = \{1,2,5,3,6\}$, thus we discard $S$. Now consider $P=\{1,2,5,3,6\}$, $X=\{2\}$ and $w=8$: the corresponding candidate child $S$ is $\{2,5,8,7\}$. Even though $\parent(2,5,8,7) = \{1,2,5,3,6\}$, we have  $\core(\{2,5,8,7\})=\{2,5\}\neq X$ and hence we discard $S$ again. When $P=\{1,2,5,3,6\}$, $X=\{2,5\}$, and $w=8$, we obtain the same $S$ a third time, but now $S$ is not discarded.
\qexd
\end{example}

The running time of Algorithm~\ref{alg:main} is stated in the following lemma. It is worth noting that the delay is close to the conditional lower bound of Lemma~\ref{lemma:conditional_lb}. Recall that $\goodtime$ is the time needed to check whether $X\in \good$.
\begin{lemma}
\label{lem:time}
Algorithm~\ref{alg:main} has delay $O(q2^q\cdot |\univ|^2 \cdot \goodtime)$.
\end{lemma}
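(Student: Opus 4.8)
The plan is to bound the delay of Algorithm~\ref{alg:main} by analyzing the cost of a single invocation of $\rec$ and then observing how the output-at-odd/even-depth trick converts a forest traversal into an enumeration with bounded delay. First I would recall the standard argument for reverse-search-style algorithms: since $\rec$ emits its argument $X$ before recursing when the recursion depth is odd, and after recursing when it is even, on any root-to-leaf path of the directed forest two consecutive outputs are separated only by the work done along at most two edges of the forest (a child-to-parent or parent-to-child transition), plus the work spent enumerating the children lists at the nodes traversed in between. So the delay is governed by the cost of one call to $\children(P,w)$ ranging over all $w\in\univ$, i.e.\ the cost of computing $\children(P)=\bigcup_{w}\children(P,w)$.

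Next I would cost out $\children(P,w)$ directly from its pseudocode. The outer loop ranges over all $X\subseteq P\setminus\{w\}$, giving at most $2^{|P|}\le 2^q$ iterations. For each such $X$ we first test $X\cup\{w\}\in\good$ in time $\goodtime$; if it passes we call $\comp(X\cup\{w\},\univ)$, which performs at most $q$ rounds of $\choice$, and each round of $\choice=\min X^+_A$ scans all of $\univ$, testing membership in $\good$ once per element, for a cost of $O(|\univ|\cdot\goodtime)$ per round, hence $O(q\cdot|\univ|\cdot\goodtime)$ for the whole $\comp$ call. Then we must verify $\langle\parent(S),\parind(S),\core(S)\rangle=\langle P,w,X\rangle$: computing $\parind(S)$, $\core(S)$ and $\parent(S)$ according to Definition~\ref{def:parent} requires replaying the canonical order of $S$ and running $\comp$ on each prefix $S[j]$, i.e.\ $O(|S|)=O(q)$ invocations of $\comp$, so $O(q^2\cdot|\univ|\cdot\goodtime)$. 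Thus each of the $2^q$ iterations costs $O(q^2\cdot|\univ|\cdot\goodtime)$ in the worst case, and summing over the $|\univ|$ choices of $w$ gives $O(2^q\cdot q^2\cdot|\univ|^2\cdot\goodtime)$ for the full $\children(P)$; since only a constant number of such nodes lie between consecutive outputs, this is also the delay, matching the claimed $O(q2^q\cdot|\univ|^2\cdot\goodtime)$ up to the extra factor of $q$ which is absorbed (the statement writes $q2^q$, and the prefix-replay contributes one more $q$; I would either state the bound as $O(q^2 2^q|\univ|^2\goodtime)$ or note that computing $\parind/\core/\parent$ can share the single $\comp(X\cup\{w\})$ computation so only $O(q)$ additional $\comp$ calls of length $\le q$ are needed, yielding exactly $q2^q$).

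The main obstacle I anticipate is pinning down precisely the cost of the parent-check $\langle\parent(S),\parind(S),\core(S)\rangle=\langle P,w,X\rangle$ and making sure it does not blow the bound: one has to observe that from $S=\comp(X\cup\{w\})$ one can recover $S$'s canonical order essentially for free (it is exactly the order in which $\comp$ added elements, provided one starts $\comp$ not from $X\cup\{w\}$ but appropriately — more carefully, one recomputes it by running $\comp(\{\source(S)\},S)$), and then $\parind(S)$ is found by testing, for increasing $j$, whether $\comp(S[j])=S$, which is at most $q$ runs of $\comp$ each costing $O(q|\univ|\goodtime)$. A secondary subtlety is confirming the delay bookkeeping for the output-parity trick at the root and leaf boundaries of the forest — the standard Uno/Avis argument — but this is routine once the per-node cost is established. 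I would therefore structure the proof as: (1) cost of one $\comp$ call is $O(q|\univ|\goodtime)$; (2) cost of one $\children(P,w)$ call is $O(2^q\cdot q^2|\univ|\goodtime)$ (or $O(2^q\cdot q|\univ|\goodtime)$ after the sharing optimization); (3) cost of $\children(P)=\bigcup_w$ multiplies by $|\univ|$; (4) the alternating-output traversal gives delay proportional to a constant number of such node expansions, concluding the stated bound.
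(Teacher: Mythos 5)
Your overall structure matches the paper's analysis: \choice costs $O(|\univ|\cdot\goodtime)$, \comp costs $O(q|\univ|\cdot\goodtime)$, each node expansion tries $2^q\cdot|\univ|$ candidate pairs $(X,w)$, and the alternating-depth output turns per-node cost into delay (the paper is actually terser on this last point than you are, simply charging the $2^q\cdot|\univ|$ computations to each solution found).

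The gap is in the cost you assign to the check $\langle\parent(S),\parind(S),\core(S)\rangle=\langle P,w,X\rangle$. Your main line prices it at $q$ full runs of \comp, i.e.\ $O(q^2|\univ|\cdot\goodtime)$ per candidate, which only yields $O(q^2 2^q|\univ|^2\goodtime)$ — a factor $q$ worse than the stated bound. You sense this and offer a repair, but as stated it does not add up: ``$O(q)$ additional \comp calls of length $\le q$'' still costs $O(q\cdot q|\univ|\cdot\goodtime)$ per candidate, so it cannot ``yield exactly $q2^q$''. The paper's actual argument is sharper: $\comp(S[j])=S$ holds precisely when $\choice(S[i])=s_{i+1}$ for every $i\ge j$, so a \emph{single sweep} evaluating $\choice(S[j])$ once for each prefix — $q$ \choice-calls in total, i.e.\ $O(q|\univ|\cdot\goodtime)$ — identifies the largest failing index and hence both \core and \parind; after that, \parent requires only one additional run of \comp. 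This makes the whole verification $O(q|\univ|\cdot\goodtime)$ per candidate, the same order as the \comp call that produced $S$, which is exactly what is needed to land on $O(q2^q\cdot|\univ|^2\cdot\goodtime)$. So the missing ingredient is replacing ``$q$ runs of \comp'' (or ``$O(q)$ shared \comp calls'') by ``$q$ evaluations of \choice plus one run of \comp''; with that substitution your proof coincides with the paper's.
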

\begin{proof}
$\choice$ can clearly be computed in $O(|\univ| \cdot \goodtime)$ time. This means that we can compute $\comp$ in $O(q|\univ| \cdot \goodtime)$. We can compute $\parent$, $\parind$ and $\core$ in $O(q|\univ| \cdot \goodtime)$ too, since it is enough to check for every $j$ if $\choice(S[j]) = s_j$, as the greatest $j$ for which this does not happen identifies the $\core$ and the $\parind$. To compute $\parent$, we then only need to run $\comp$ once. Since we run these computations $2^q\cdot|\univ|$ times for every solution found, the thesis follows.
\end{proof}

\section{Refined Framework for Connected Hereditary Properties}
\label{sec:commutable}

A natural question is whether a more refined method to generate children is possible. As we show in this section, this has a positive answer at least for connected hereditary graph properties, due to what we call the \oursys property: 

\begin{definition}[Commutable property]
A strongly accessible set system $(\univ,\good)$ verifies the \emph{\oursys} commutable property if, given any $X,Y \in \good$ such that $X\neq \emptyset$ and $X \subset Y$, for any $a,b\in Y\setminus X$, if $X\cup\{a\} \in\good$ and $X \cup\{b\} \in\good$ then $X\cup\{a,b\}\in\good$. 
\end{definition}

We will show that connected hereditary properties (that is, the set systems induced by connected hereditary properties) also verify this \oursys property, and that this will allow us to design a smarter way to generate children solutions.

\begin{lemma}
The set systems induced by connected hereditary graph properties are strong accessible and verify the \oursys property.
\end{lemma}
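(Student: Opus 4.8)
The plan is to verify the two conditions separately. Recall that a connected hereditary property is defined by a host graph $H$ (in our running example $H = G_B$), and a set $X \subseteq \univ$ is in $\good$ iff $X$ is connected in $H$ and satisfies the underlying hereditary constraint; the closure under connected induced subgraphs means that if $X \subseteq Y$, $Y \in \good$, and $X$ is connected in $H$, then $X \in \good$.

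\medskip

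\textbf{Strong accessibility.} I would take $X, Y \in \good$ with $X \subset Y$ and exhibit some $z \in Y \setminus X$ with $X \cup \{z\} \in \good$. The natural candidate is a vertex $z \in Y \setminus X$ that is adjacent in $H$ to the connected set $X$; such a $z$ exists because $Y$ is connected in $H$, so starting from any vertex of $X$ and any vertex of $Y \setminus X$ there is a path in $H[Y]$, and the first vertex of $Y \setminus X$ encountered along that path is adjacent to $X$. Then $X \cup \{z\}$ is connected in $H$ (it is $X$ plus one neighbor), and since $X \cup \{z\} \subseteq Y \in \good$ and $X \cup \{z\}$ is connected, closure under connected induced subgraphs gives $X \cup \{z\} \in \good$. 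One small edge case: if $X = \emptyset$, then any $z \in Y \cap Z$ works, but actually we should double-check the definition permits $X = \emptyset$ with $\emptyset \in \good$ assumed; here Fact~\ref{obs:zeta} already handles that flavor, so a one-line remark suffices.

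\medskip

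\textbf{Commutable property.} Here I take $X, Y \in \good$ with $\emptyset \neq X \subset Y$ and $a, b \in Y \setminus X$ such that $X \cup \{a\} \in \good$ and $X \cup \{b\} \in \good$, and I must show $X \cup \{a, b\} \in \good$. The key observation is that $X \cup \{a, b\}$ is connected in $H$: since $X \cup \{a\} \in \good$, the vertex $a$ has a neighbor in $X$ (as $X \neq \emptyset$ and $X \cup \{a\}$ is connected, $a$ cannot be an isolated vertex of $H[X \cup \{a\}]$); likewise $b$ has a neighbor in $X$; and $X$ itself is connected; hence $X \cup \{a, b\}$ is connected in $H$. Then $X \cup \{a, b\} \subseteq Y \in \good$ together with connectivity yields $X \cup \{a,b\} \in \good$ by closure under connected induced subgraphs.

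\medskip

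The argument is short and the main (mild) obstacle is simply being careful about where the hypothesis $X \neq \emptyset$ is actually needed: it is precisely what guarantees that a newly added vertex, being part of a connected superset, has a neighbor inside $X$ rather than being attached to the other new vertex only — without $X \neq \emptyset$ one could have $X = \emptyset$, $a$ and $b$ non-adjacent singletons each in $\good$, but $\{a, b\}$ disconnected and hence not in $\good$, which is exactly why the commutable definition excludes $X = \emptyset$. I would state this as a remark so the reader sees the definition is tight. Everything else is a direct invocation of connectivity plus the connected-hereditary closure axiom, with no case analysis beyond the empty-set check.
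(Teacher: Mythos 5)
Your proof is correct and follows essentially the same route as the paper: connectivity of $Y$ yields a vertex of $Y\setminus X$ adjacent to $X$ for strong accessibility, and for the commutable property the nonemptiness of $X$ forces $a$ and $b$ to attach to $X$, so $X\cup\{a,b\}$ is a connected subset of $Y$ and closure under connected induced subgraphs finishes both parts. Your explicit treatment of the $X=\emptyset$ edge case is a small extra nicety the paper leaves implicit, but the substance is identical.
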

\begin{proof}
Let $(\univ,\good)$ be the set system induced by a connected hereditary property, and let $X,Y \in \good$ be two solutions such that $X \subset Y$.
Firstly, there must be $v\in Y\setminus X$ connected to some vertex in $X$, otherwise $Y$ would not be connected, as no vertex in $X$ would be connected to any vertex in $Y\setminus X$; since $Y\in\good$ and thus $Y$ is connected, we have some $v\in Y\setminus X$ connected to some vertex in $X$, and $X\cup\{v\}\in\good$ by definition of connected hereditary, thus $(\univ,\good)$ is strongly accessible.
Furthermore, assume $X\ne \emptyset$, and let $a,b\in Y\setminus X$ be two elements such that $X\cup\{a\} \in\good$ and $X\cup \{b\}\in \good$ .
We thus have that $X$ is connected, and so are $X\cup\{a\}$ and $X\cup\{b\}$. Since $X$ is not empty it must be that $X\cup\{a,b\}$ is connected. We have that $X\cup\{a,b\}$ is a connected subgraph of $Y$, which by definition of connected hereditary property implies $X\cup\{a,b\}\in\good$, thus $(\univ,\good)$ verifies the \oursys property.
\end{proof}

Again, we define $\choice$ (and thus $\preceq$) so that $\comp$ creates a canonical order.
To do so we first need to introduce the notion of \level.

\begin{definition}
\label{def:layer}
Given $X\in\good$ and a \emph{starting element} $t\in X \cap Z$, define inductively $B_i$:\footnote{Note that $B_i$ is made only of elements from $X$ whereas $B_i^+$ is made of elements from $X\cup X^+$.}
\begin{itemize}
\item $B_0=\{t\}$
\item $B_{i}=B_{i-1}\cup (B_{i-1}^+\cap X)$, for $i>0$
\end{itemize}
Then, for any $y\in X\cup X^+$, its \emph{\level} relatively to $X$ from $t$ is $\lev_t^X(y)=\min\{i:y\in B_{i-1}^+\}$.
\end{definition}

We simply write $\lev^X$ when $t=\source(X)$. This is useful for defining $\choice(X,A)$, as given in Algorithm~\ref{alg:ristr}: for each element $y \in X^+_A$, associate a pair $\langle \lev^X(y), y \rangle$ with $y$; return the element $y \in X^+_A$ whose associated pair is the lexicographically minimum. When $\choice$ is plugged into $\comp$, as we saw in Section~\ref{sec:our-framework}, we obtain the canonical order of a maximal solution $S$ and, consequently, the notions of \parent, \core, \parind.

\begin{figure}[t]
    \centering
    \includegraphics[width=.8\textwidth]{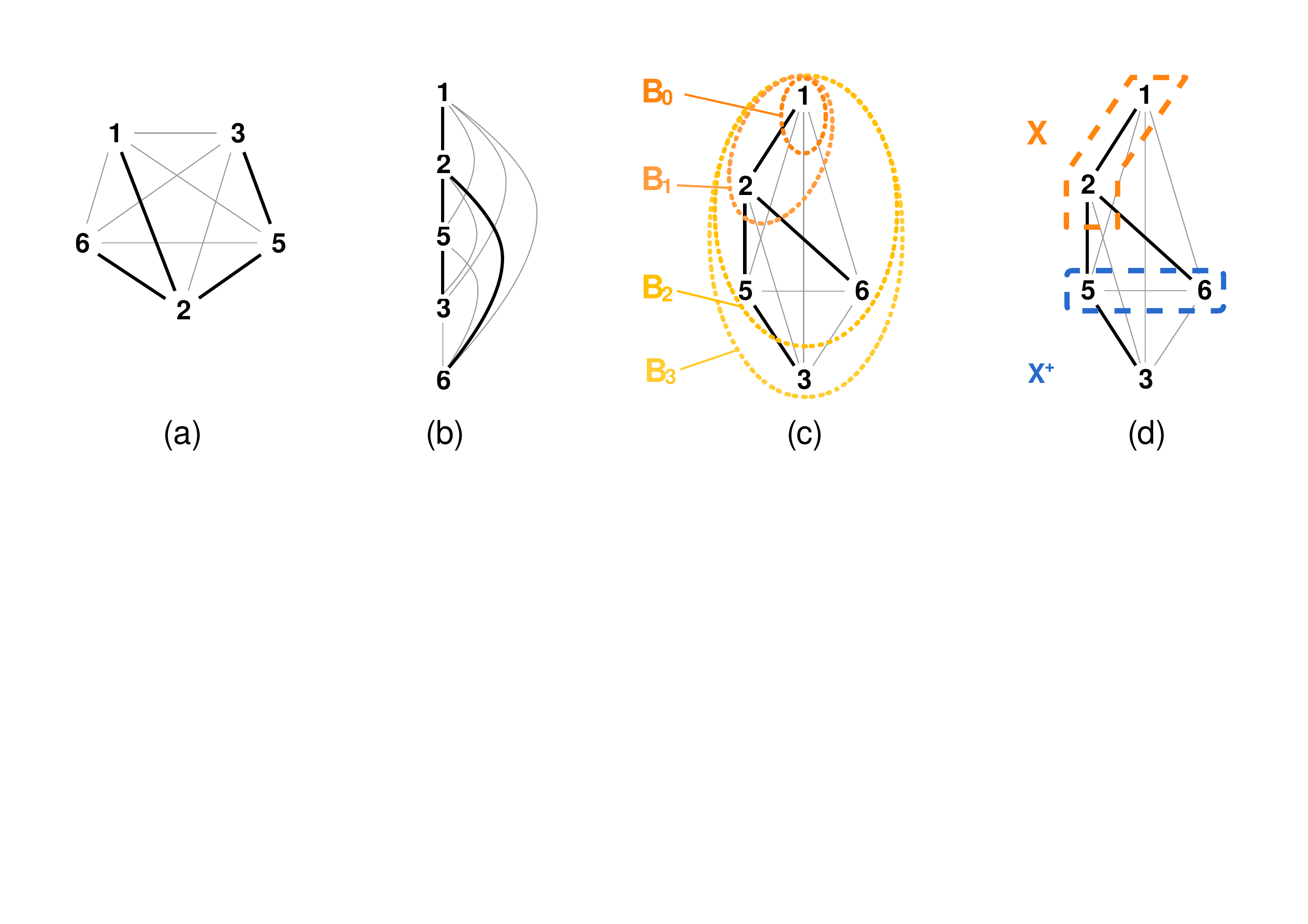}
    \caption{(a) A maximal \bclique, (b) its canonical order $1,2,5,3,6$ from Section~\ref{sec:our-framework}, and (c) its canonical order $1,2,5,6,3$ from Section~\ref{sec:commutable}. (d) A non-maximal \bclique $X$ and its set $X^+$.} 
    \label{fig:clq-forms}
\end{figure}

\begin{algorithm}[t]
\caption{Refined framework for connected hereditary properties: $\children + \choice$}
\label{alg:ristr}
\small
\DontPrintSemicolon
\SetKw{Yield}{yield return}
\BlankLine\nonl 
\begin{minipage}[t]{\textwidth}

\SetKwProg{myproc}{Function}{}{}
  \myproc{$\children (P, w)$}{
        \lIf{$w\in P$}{\Return{$\emptyset$}}
        \ForEach{$R\in \restr(P,w) \setminus \{P\}$}{
            \ForEach{$s \in (R\cap Z) \setminus \{w\}$}{
                $\mathit{coreS} \gets \comp(\{s\},R)|_w$\;
                $S \gets \comp(\mathit{coreS} \cup \{w\}, \univ)$\;
                
                \If{$\langle \parent(S), \parind(S), \erre(S), \source(S)\rangle = \langle P,w,R,s\rangle$}{
                    \Yield{$S$}
                }
            
            }
        }
    }
\end{minipage}~\hspace*{-0.38\textwidth}\begin{minipage}[t]{0.35\textwidth}

\SetKwProg{myproc}{Function}{}{}
  \myproc{$\choice (X, A)$}{
  \Return{$\argmin\limits_{y\in X^+_A}\, \langle \lev^X(y), y \rangle$}   
  }
 
\end{minipage}
\end{algorithm}

\begin{example}
Consider the \bclique in Figure~\ref{fig:clq-forms}(a), where $X = \{1,2,3,5,6\}$ is the set of its vertices, and the starting element is $t = \source(X) = 1$. The elements of $X$ have layers $0, 1, 3, 2, 2$, respectively, relatively to $X$ from $t$ (e.g. $\lev^X(2)=1$ and $\lev^X(5)=2$). Running $\comp( \{\source(X)\}, X)$ with the \choice function in Algorithm~\ref{alg:ristr}, we obtain the canonical order in Figure~\ref{fig:clq-forms}(c) as follows. We add~$2$ to $\{1\}$ as it is the only option for \choice. Next, we can choose between~$5$ and~$6$ for extending~$\{1,2\}$: we adds~$5$, as $\langle\lev^X(5),5\rangle$ is smaller than $\langle\lev^X(6),6\rangle$. This gives $\{1,2,5\}$ and \choice selects $6$ between the next options $6$ and $3$. Finally, it selects $3$, obtaining $X$ and the above canonical order. Also, we observe that $X \in \roots$ as \comp applied to each prefix $S[j]$ (i.e. $1$; $1,2$;~\dots; $1,2,5,6,3$) always gives $X$.

Consider now the other two \bcliques, $X_1 = \{3,4,5\}$ and $X_2 = \{2,5,7,8\}$, from Figure~\ref{fig:bccliques}. Their canonical order is $3|5|4$ (with layers in increasing order separated by ``$|$'') and $2|5,8|7$. Also, $X$ is their parent: for example, $X = \parent(X_1) = \comp(\{3,5\})$ and thus $\core(X_1) = \{3,5\}$ and $\parind(X_1) = 4$. We observe that during the execution of \comp, the \source changes (hence the canonical order): $\{3|5\} \rightarrow \{2|5|3\} \rightarrow \{1|2|5|3\} \rightarrow  \{1|2|5,6|3\}$. 
(Also, $\core(X_2) = \{ 2,5\}$ and $\parind(X_2) = 8$.)
In general, during the execution of $\comp$ the \source of $X$ may change and, consequently, the \levels of the elements in $X\cup X^+$ change accordingly. This does not affect $X^+$, but only the relative order of its elements. 
\qexd
\end{example}

Recalling that we have to meet Requirement~\ref{prop:pref-min}, we define the order $\preceq$ between any two maximal solutions $S$ and $T$ as follows. Let their canonical order be $s_1,\ldots,s_{|S|}$ and $t_1,\ldots,t_{|T|}$.
Then $S\prec T$ iff $\langle \lev_{s_1}^S(s_j),s_j \rangle$ is lexicographically smaller than $\langle \lev_{t_1}^T(t_j),t_j \rangle$, for the smallest $j$ giving different pairs (note that $j$ exists as both solutions are maximal).

We are now ready to state the version of Lemma~\ref{lem:comp-strongly} for commutable set systems, which can be proven in the same way as the version for strongly accessible set systems.

\begin{lemma}
\label{lemma:requirement-alg-ristr}
When combined with \choice from Algorithm~\ref{alg:ristr}, \comp in Algorithm~\ref{alg:main} fulfills Requirement~\ref{prop:pref-min}.
\end{lemma}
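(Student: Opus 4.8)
The plan is to mirror the proof of Lemma~\ref{lem:comp-strongly}, keeping the same skeleton but replacing the ``plain minimum'' comparison with the layered pair comparison $\langle \lev^X(\cdot),\cdot\rangle$, and checking that the one extra subtlety---the fact that $\source$ (and hence the layer function) may change as $\comp$ adds elements---does not break the argument. Concretely, fix a maximal solution $S$ with canonical order $s_1,\dots,s_{|S|}$, and fix $j\le|S|$. If $\comp(S[j])=S$ there is nothing to prove, so assume $T:=\comp(S[j])\neq S$, with canonical order $t_1,\dots,t_{|T|}$. Since $S[j]$ is a prefix of the canonical order of $S$ and also the starting point of the computation producing $T$, we have $t_i=s_i$ for $i\le j$; let $i>j$ be the smallest index with $t_i\neq s_i$, which exists because $S\neq T$ and both are maximal. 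The goal is to show $t_i$ precedes $s_i$ in the order dictated by $\choice$ at that step, i.e.\ $\langle \lev^{T[i-1]}(t_i),t_i\rangle \preceq_{\mathrm{lex}} \langle \lev^{T[i-1]}(s_i),s_i\rangle$, which by definition of $\preceq$ on maximal solutions gives $T\preceq S$.

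The key point to establish is that at the $(i-1)$-th step the two computations are looking at \emph{the same} state: $T[i-1]=S[i-1]$ as sets (they agree on the first $i-1$ canonical elements by choice of $i$), and moreover $\source(T[i-1])=\source(S[i-1])$, so the layer function $\lev^{T[i-1]}=\lev^{S[i-1]}$ is literally the same function. The equality of $\source$ holds because $\source(X)$ is determined entirely by the \emph{set} $X$ (it is $\min(X\cap Z)$), not by the order in which its elements were added; since $T[i-1]$ and $S[i-1]$ are equal as sets, their $\source$ agree. Consequently both $\choice(T[i-1],\univ)$ and the ``candidate'' $s_i$ are evaluated with respect to identical data: the candidate set $T[i-1]^+=S[i-1]^+$ and the identical pair-valued key. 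Now $s_i\in S[i-1]^+$ (because $S[i-1]\cup\{s_i\}=S[i]\in\good$ by Lemma~\ref{prop:pref-good}), so $s_i$ is one of the options available to $\choice$ at that step; since $\choice$ returns the $\argmin$ of the key $\langle\lev^{T[i-1]}(y),y\rangle$ over $y\in T[i-1]^+$, and $t_i$ is that argmin, we get $\langle\lev^{T[i-1]}(t_i),t_i\rangle \preceq_{\mathrm{lex}}\langle\lev^{T[i-1]}(s_i),s_i\rangle$, hence $t_i\neq s_i$ forces strict inequality and $T\prec S$. This proves $\comp(S[j])\preceq S$, which is exactly Requirement~\ref{prop:pref-min}.

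I would write the proof as essentially a two-line reduction: ``The proof is identical to that of Lemma~\ref{lem:comp-strongly}, replacing the comparison by element value with the lexicographic comparison of the pairs $\langle\lev^X(y),y\rangle$; the only additional observation needed is that $S[i-1]=T[i-1]$ as sets implies $\source(S[i-1])=\source(T[i-1])$ and hence the two occurrences of $\lev$ denote the same function, so $\choice$ is evaluating the same optimization problem in both computations.'' The main obstacle---and the only place where this lemma is genuinely different from the strongly accessible case---is precisely the dependence of the canonical order on the (mutable) $\source$: one must be careful that the order $\preceq$ between maximal solutions was defined using $\lev_{s_1}^S$ where $s_1=\source(S)$, and that at the divergence index $i$ the relevant layer bases for $S$ and $T$ coincide. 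Once that coincidence is spelled out, the argument collapses to the same ``$\argmin$ beats any available competitor'' reasoning as before, so I do not expect any heavy computation; the care is entirely in the bookkeeping of which $\source$/$\lev$ is in play at the step where the two solutions first differ.
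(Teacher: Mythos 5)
Your high-level plan (redo Lemma~\ref{lem:comp-strongly} with the pair comparison $\langle\lev^X(\cdot),\cdot\rangle$) is the route the paper intends, but the step you use to dispose of the ``mutable source'' issue is false, and it fails on exactly the case that makes this lemma different from Lemma~\ref{lem:comp-strongly}. You assert that the canonical order of $T=\comp(S[j])$ agrees with that of $S$ on the first $j$ positions, so that the first divergence is at some $i>j$ where $T[i-1]=S[i-1]$ and the two sources (hence layer functions) coincide. But $\comp(S[j])=\comp(S[j],\univ)$ may absorb elements of $\univ\setminus S$, and the canonical order of $T$ is then recomputed from $\source(T)=\min(T\cap Z)$, not from $s_1$; the paper's own running example refutes your claim: for $S=\{3,4,5\}$ with canonical order $3,5,4$ and $j=2$, one has $T=\comp(\{3,5\})=\{1,2,3,5,6\}$ with canonical order $1,2,5,6,3$, so $t_1=1\neq 3=s_1$ and the first divergence is at position $1\le j$. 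In that situation your argument says nothing (there is no nonempty common prefix whose $\source$ you can invoke); $T\prec S$ must be argued by a different mechanism (here $t_1=\source(T)\le s_1$ because $s_1\in T\cap Z$, so the very first pairs decide), and in general one must show that any element of $T\setminus S$ that displaces some $s_k$ in the canonical order carries a lexicographically smaller pair --- which is where the \oursys property and the ``key fact'' exploited in Lemma~\ref{lemma:key} actually do work.

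There is a second, independent slip at the end: \choice ranks candidates by $\langle\lev^{T[i-1]}(y),y\rangle$, i.e.\ layers relative to the current prefix, with candidates restricted to $T$ when recomputing $T$'s canonical order, whereas $\preceq$ in Section~\ref{sec:commutable} compares $\langle\lev_{t_1}^{T}(t_k),t_k\rangle$ against $\langle\lev_{s_1}^{S}(s_k),s_k\rangle$, i.e.\ layers computed in the two \emph{full} maximal solutions, and it is driven by the first differing pair, not the first differing element. So even in your favourable case $i>j$ you still owe several facts: that for $k<i$ the full-solution pairs coincide although they are computed in $T$ and in $S$ respectively; that $s_i$ is actually available to \choice (for $i>j$ it need not belong to $T$); and that winning the prefix-relative comparison at position $i$ implies winning the full-solution comparison used by $\preceq$. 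None of these is automatic. Since the paper itself only asserts that the lemma ``can be proven in the same way'' as Lemma~\ref{lem:comp-strongly}, a complete write-up must in any case work through this source/layer bookkeeping rather than reduce it, as you do, to the identity of the two layer functions at the divergence step.
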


We finally need to implement \children to meet the conditions of Theorem~\ref{the:main-strongly} for reverse search. A key ingredient is the concept of \emph{restricted problem}, originally introduced by Lawler et al.~\cite{lawler1980generating}:
given a maximal solution $P\in\good$ and an element $w\in \univ\setminus P$, the restricted problem $\restr(P,w)$ asks to list all maximal solutions $R \neq P$ in the reduced set system $(P\cup\{w\}, \good)$. Under suitable conditions, Cohen et al.~\cite{cohen2008generating} remarkably prove that we can efficiently enumerate the maximal solutions in (connected) hereditary systems $(\univ, \good)$ iff we can do that in its restricted version for $(P\cup\{w\}, \good)$.
We will assume that we can enumerate the solutions of any restricted problem $\restr(P, w)$ in at most $\restrtime$ time, and using at most $\restrspace \ge q$ space. Moreover, we will use $\restrbound$ to denote an upper bound on the number of solutions of any restricted problem.
Using $\restr(P,w)$ as a black box we give some necessary conditions to compute $\children(P,w)$, recalling that $\children(P)=\bigcup_{w\in \univ} \children(P,w)$. 

The main conceptual step is showing that examining the solutions of $\restr(P,w)$ is enough for our goal. Define the truncated function $\comp(X,A)|_w$ as follows: during the execution of \comp, if $\choice(X,A)$ returns $w$ then stop the execution and return the current $X$  (so that $w \not \in X$). 
We need the above truncated function and this key lemma to implement $\children(P,w)$ in Algorithm~\ref{alg:ristr}.

\begin{lemma}
\label{lemma:key}
For any maximal solution $P$ and $w \in \univ$, we have that $\children(P,w)$ is contained in the set of maximal solutions $S$ such that
\begin{enumerate}
\item \label{item:key:1} there exists $R\in \restr(P,w)$ (with $R \neq P$ and $w \not \in P$) and
\item \label{item:key:2} $S = \comp(\, \comp(\{s\},R)|_w \cup \{w\}, \, \univ\,)$ (with $s \in (R\cap Z)$ and $s \neq w$).
\end{enumerate}
\end{lemma}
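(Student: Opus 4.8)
The plan is to take an arbitrary child $S \in \children(P,w)$ and show it satisfies items \ref{item:key:1} and \ref{item:key:2}. By Definition~\ref{def:parent} and Lemma~\ref{lem:strongly-children}, we know $\core(S) \subset P$, $\parind(S) = w \not\in P$, and $S = \comp(\core(S) \cup \{w\}, \univ)$; moreover $\core(S) = S[j]$ where $s_{j+1} = w$ in the canonical order $s_1, \dots, s_{|S|}$ of $S$. So the target is to identify the right restricted solution $R$ and the right starting singleton $s$. The natural candidates are: let $s = \source(S) = s_1$ (which lies in $Z$ and, since $w = s_{j+1}$ with $j \geq 1$, is distinct from $w$), and let $R$ be a maximal solution of the restricted set system $(P \cup \{w\}, \good)$ that contains $\core(S)$. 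Since $\core(S) \cup \{w\} \subseteq P \cup \{w\}$, one has to check $\core(S) \cup \{w\} \in \good$ (it is, being a prefix of $S$ by Lemma~\ref{prop:pref-good}) and then extend it inside $P \cup \{w\}$ to a maximal $R$; strong accessibility (Fact~\ref{obs:zeta}) guarantees such an $R$ exists, and $R \neq P$ because $w \in R$ but $w \notin P$. This gives item~\ref{item:key:1}.

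\textbf{The heart of the argument} is showing $\comp(\{s\}, R)|_w = \core(S)$, from which item~\ref{item:key:2} follows immediately by applying $\comp(\cdot \cup \{w\}, \univ)$ and invoking $S = \comp(\core(S) \cup \{w\}, \univ)$ from Lemma~\ref{lem:strongly-children}. The key claim is that running $\comp$ starting from $\{s\} = \{\source(S)\}$ but \emph{restricted to the ground set $R$} reproduces exactly the canonical-order prefix $s_1, \dots, s_j = \core(S)$, and then would select $w$ next — so the truncated version stops right at $\core(S)$. This is where the \oursys property and the layer-based \choice of Algorithm~\ref{alg:ristr} are essential. The subtle point is that $\comp(\{s\}, R)$ and $\comp(\{s\}, \univ) = S$ make their choices using \choice, which compares pairs $\langle \lev^X(y), y\rangle$: one must argue that up to the point where $w$ would be chosen, the set of available elements within $R$ and the layer values agree with those in the unrestricted run. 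Concretely, for a partial solution $X = S[i]$ with $i \leq j$, one needs $X^+_R \supseteq$ the relevant part of $X^+_{\univ}$ so that the same minimizer is picked; here commutability (closure under adding two compatible elements) together with $\core(S) \cup \{w\} \subseteq R$ is what keeps the elements of the canonical prefix of $S$ available and the layers stable inside $R$, so \choice makes identical decisions until it reaches $w$. Because $w = s_{j+1}$ is exactly what \choice$(S[j], \univ)$ returns in the run producing $S$, and $w \in R$, the restricted run also reaches $w$ as its choice at step $j+1$ and the truncation fires precisely there.

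\textbf{The main obstacle} I anticipate is the layer-stability bookkeeping: when we shrink the ground set from $\univ$ to $R$, the set $X^+$ and hence the layers $\lev^X(\cdot)$ could in principle change, since $B_i = B_{i-1} \cup (B_{i-1}^+ \cap X)$ and $B_i^+$ depend on which elements are "good" as extensions — and restricting to $R$ removes potential extensions. The argument must show that for every prefix $X = S[i]$ with $i < j+1$, the layer of each element of $S[j] \cup \{w\}$ relative to $X$ within $R$ equals its layer relative to $X$ within $\univ$, and that no element of $R \setminus (S[j] \cup \{w\})$ could sneak in and be chosen by \choice ahead of the canonical successor. The first part follows because the layer of $y \in S[j]\cup\{w\}$ is witnessed by a chain of elements all lying in $S[j]\cup\{w\}\subseteq R$, so shrinking the universe to $R$ cannot increase it (and cannot decrease it either, since $R \subseteq P\cup\{w\}$ and the relevant $B_i$'s built inside $R$ are contained in those built inside $\univ$). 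The second part — no "intruder" from $R$ beats the canonical successor — is handled by noting that \choice in the $S$-run already examined \emph{all} of $X^+_{\univ} \supseteq X^+_R$ and picked $s_{i+1}$; restricting the candidate pool to $X^+_R$ can only remove options, never create a new winner, as long as $s_{i+1} \in R$, which holds for $i+1 \leq j$ since $s_{i+1} \in S[j] \subseteq R$, and for $i+1 = j+1$ since $s_{j+1} = w \in R$. Assembling these observations gives $\comp(\{s\},R)|_w = \core(S)$ and completes the containment.
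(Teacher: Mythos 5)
There are two genuine gaps, both at places where your proposal asserts as ``known'' exactly what the paper has to work to prove. First, you write that ``by Definition~\ref{def:parent} and Lemma~\ref{lem:strongly-children} we know $\parind(S)=w\notin P$''; neither gives this. Lemma~\ref{lem:strongly-children} only yields $w\notin\core(S)$, and in a general strongly accessible system $\parind(S)$ may well lie in $\parent(S)$ (this is why Algorithm~\ref{alg:main} does not exclude $w\in P$, while Algorithm~\ref{alg:ristr} does). Proving $w\notin P$ is the first half of the paper's proof: assuming $w\in P$, one takes $y=\choice(\core(S))$, shows $y\notin S$, uses the \oursys property on $\core(S)\cup\{w\}\cup\{y\}\subseteq P$ to get $y\in(\core(S)\cup\{w\})^+$, and then the key monotonicity fact of the layer-based $\choice$ (if $b=\choice(X,A)\in (X\cup Y)^+_A$ then $b=\choice(X\cup Y,A)$) to contradict $\comp(\core(S)\cup\{w\})=S$. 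Your item~\ref{item:key:1} (including $R\neq P$) rests on this unproved claim.

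Second, your ``no intruder'' step is based on a false premise. You argue that the canonical successor $s_{i+1}$ was already the winner over all of $X^+_{\univ}\supseteq X^+_R$, so shrinking the pool to $X^+_R$ cannot create a new winner. But the canonical order of $S$ is produced by $\comp(\{\source(S)\},S)$, i.e.\ with candidate pool restricted to $S$, and $X^+_R\not\subseteq X^+_S$ since $R$ contains elements of $P\setminus S$. In fact the claim is provably false at the core boundary: $\choice(\core(S),\univ)\neq w$, since otherwise $\comp(\core(S),\univ)$ would equal $\comp(\core(S)\cup\{w\},\univ)=S$, contradicting $\parent(S)\neq S$. So an element $r\in R\setminus S$ may genuinely beat the canonical successor in the pair comparison at some step of $\comp(\{s\},R)$, and your argument gives no way to rule this out. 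The paper does not rule it out directly; instead it derives a contradiction by propagating the intruder forward: if $r_i\neq s_i$ is the first wrong element, then $r_i\notin S$, and by the \oursys property (using $\core(S)\cup\{w\}\cup\{r_i\}\subseteq R$) one gets inductively $r_i\in S[j]^+_R$ up to $S[cs+1]=\core(S)\cup\{w\}$; the key fact then forces $r_i=\choice(\core(S)\cup\{w\},R)$, so the unrestricted completion of $\core(S)\cup\{w\}$ would not pick $s_{cs+2}$, contradicting $\comp(\core(S)\cup\{w\},\univ)=S$. This forward-propagation argument is the heart of the proof and is missing from your proposal. (Your worry about layer instability under restriction to $R$, by contrast, is moot: $\lev^X$ in Definition~\ref{def:layer} depends only on $X$ and the starting element, not on the pool $A$ passed to $\choice$. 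Also note the paper fixes $R=\erre(S)$ deterministically, which the algorithm needs for duplicate avoidance, whereas your generic maximal extension of $\core(S)\cup\{w\}$ would suffice for the bare containment statement.)
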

\begin{proof}
We will prove that, for any $S$ such that $\parent(S) = P$, the two conditions of the lemma hold by choosing $w = \parind(S)$, $s = \source(S)$ and $R = \erre(S) = \comp(\core(S) \cup \{\parind(S)\}, \parent(S) \cup \{\parind(S)\})$\footnote{Note that $\erre(S)$ is well defined as $\core(S)\cup \{\parind(S)\}\in\good$, by Lemma~\ref{prop:pref-good}, as it is a prefix of $S$.}.

During the proof, we will use repeatedly the following \emph{key fact} about our $\choice$ function: \textit{for any $Y \subseteq A$, if $b = \choice(X, A)$ belongs to $(X\cup Y)^+_A$, then $b = \choice(X \cup Y, A)$} (provided that the starting element stays the same). Indeed, anything in $Y\setminus X$ must have a \level at least as large as $b$, so anything in $(X \cup Y)^+_A$ that is not in $X^+_A$ must have a larger layer than $b$. Note that this property does not hold if we use the order of Section~\ref{sec:our-framework}.

By construction of \comp, $\erre(S)$ is a maximal solution in $\parent(S) \cup \{\parind(S)\}$ and it is different from $\parent(S)$, as $\parind(S) \in \erre(S)$, provided that we can prove $\parind(S) \not \in \parent(S)$. Suppose by contradiction that $w = \parind(S) \in \parent(S)$, and let $y = \choice(\core(S))$ be the first element that $\comp$ adds to $\core(S)$ when generating $\parent(S)$. Note that $y \neq w$ by definition of \core. Clearly $y \not \in S$, as otherwise $\core(S) \cup \{y\}$ would be a prefix of $S$ and so $y = \parind(S) = w$. By the \oursys property, the fact that $\core(S) \cup \{w\} \cup \{y\} \subseteq \parent(S)$ and both $\core(S) \cup \{w\}, \core(S) \cup \{y\} \in \good$ implies that $y \in (\core(S) \cup \{w\})^+$. By the aforementioned key fact we then have that $y = \choice(\core(S) \cup \{w\})$, so $\comp(\core(S) \cup \{w\}) \neq S$, which contradicts the definition of $\core(S)$ and $\parind(S)$. This proves the first part of the statement.

Regarding the second part, it is enough to prove that $\comp(\{s\}, R)|_w = \core(S)$. We will only consider the case in which $\core(S) \cup \{w\} \neq S$, as otherwise the proof is trivial (since $\erre(S) = S$). Let $s = s_1, \dots, s_{cs}$ be the canonical order of $\core(S)$, where $cs = |\core(S)|$ and $w$ is denoted as $s_{cs+1}$. Suppose now that the first $cs$ elements added by $\comp(\{s\}, R)$ are not $s_2, \dots, s_{cs+1}$ and let $r_i \neq s_i$, $i>1$, be the first ``wrong'' element that is added. Thus we have $r_i = \choice(S[i-1], R)$, recalling that $S[i-1] = \{s_1, \dots, s_{i-1}\}$. Note that $r_i \not \in S$, as that would contradict the definition of canonical order. By the \oursys property and the fact that and $\core(S) \cup \{w\} \cup \{r_i\} \subseteq R$, we can easily prove (by induction) that $r_i \in S[j]^+_R$ for $j=i-1, \dots cs+1$. Since $S[cs+1] = \core(S)\cup\{w\}$, we have $r_i \in (\core(S) \cup \{w\})^+_R$ and, by the aforementioned key fact, $r_i = \choice(\core(S) \cup \{w\}, R)$. This implies that $r_i$ is a better candidate than $s_{cs+2}$ for $\choice$: this continues to be true if we consider the whole $\univ$ instead of just $R$, so the first step of $\comp(\core(S) \cup \{w\})$ does not add $s_{cs+2}$. As a result, we have $\comp(\core(S) \cup \{w\}) \neq S$, which is a contradiction.
\end{proof}

We remark that Lemma~\ref{lemma:key} does not necessarily hold if we use the canonical order from Section~\ref{sec:our-framework} (as briefly noted in the proof), and avoids us to blindly target all of $X \subset P \cup \{w\}$.

We observe that different solutions $R \in  \restr(P,w)$ could lead to the same maximal solution $S$. We avoid this by checking that $R = \erre(S)$ when we generate a child, as $\erre(S)$ is computed deterministically from $S$. 

The following theorem holds.
\begin{theorem}
\label{the:commutable-enumeration}
Given a solution $P$, an element $w\in \univ\setminus P$, a solution $R\neq P$ of the restricted problem $\restr(P,w)$, there exists an algorithm that enumerates all maximal solutions $S$ such that $\parent(S) = P$, $\parind(S) = w$, $\erre(S) = R$ in time $O(q^3|\univ|\goodtime)$ and space $O(\goodspace)$.
\end{theorem}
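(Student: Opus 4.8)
The plan is to fix a parent solution $P$, an element $w \in \univ \setminus P$, and a solution $R \neq P$ of the restricted problem $\restr(P,w)$, and then analyze the inner loop of $\children(P,w)$ in Algorithm~\ref{alg:ristr} that ranges over $s \in (R \cap Z) \setminus \{w\}$. By Lemma~\ref{lemma:key}, every $S$ with $\parent(S) = P$, $\parind(S) = w$, and $\erre(S) = R$ arises as $S = \comp(\, \comp(\{s\},R)|_w \cup \{w\}, \univ\,)$ for $s = \source(S) \in R \cap Z$; so the loop over such $s$ generates all the desired solutions, plus possibly some spurious ones. The spurious ones are filtered out by the test $\langle \parent(S), \parind(S), \erre(S), \source(S)\rangle = \langle P,w,R,s\rangle$, which is satisfied exactly when $S$ is a genuine child with the stipulated parameters and $s$ is its true \source. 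Since $\erre(S)$ is computed deterministically from $S$, distinct $R$'s cannot yield the same retained $S$, and distinct $s$'s within the same $R$ likewise cannot, because the retained $S$ has $\source(S)=s$ forced by the check; hence no duplicates. Correctness of enumeration then follows from Lemma~\ref{lemma:key} together with this observation.

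For the running time, I would account for the cost of one iteration of the $s$-loop and multiply by $|R \cap Z| \le q$ (note $|R| \le q$ since $R$ is a solution in the reduced system and $q$ bounds maximal solution sizes). Inside one iteration: computing $\mathit{coreS} = \comp(\{s\},R)|_w$ costs $O(q|\univ|\goodtime)$ by the analysis already used in Lemma~\ref{lem:time} (each of the at most $q$ calls to \choice scans $O(|\univ|)$ candidates, each requiring a membership test in $\good$); computing $S = \comp(\mathit{coreS} \cup \{w\}, \univ)$ costs another $O(q|\univ|\goodtime)$; and evaluating the four-tuple test requires computing $\parent(S)$, $\parind(S)$, $\core(S)$, $\erre(S)$, $\source(S)$ from $S$, which as in Lemma~\ref{lem:time} amounts to checking, for each prefix index $j \le |S|$, whether $\choice(S[j]) = s_{j+1}$ — this is $O(q)$ calls to \choice, i.e. $O(q|\univ|\goodtime)$, plus one more $\comp$ call to recompute $\erre(S)$. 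So one iteration is $O(q|\univ|\goodtime)$, and the whole loop is $O(q^2|\univ|\goodtime)$. The $q^3$ in the statement leaves room for the $\lev^X$ computations inside \choice: evaluating $\lev^X(y)$ for all $y$ requires building the layer sets $B_0, B_1, \ldots$, which is a BFS-like computation costing up to $O(q|\univ|\goodtime)$ per \choice call, so one \choice call is $O(q|\univ|\goodtime)$, one \comp is $O(q^2|\univ|\goodtime)$, and the full loop is $O(q^3|\univ|\goodtime)$, as claimed. For space, everything is computed on the fly: the sets $\mathit{coreS}$, $S$, the layer structure, and the intermediate solutions all have size $O(q)$, and we never store more than a constant number of them, giving $O(\goodspace)$ (recall $\goodspace = \Omega(q)$ is already needed just to test membership in $\good$).

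The main obstacle is establishing that the four-tuple check together with Lemma~\ref{lemma:key} yields \emph{exactly} the target set with no duplicates — in particular, arguing carefully that the triple of derived quantities $(\parind(S), \erre(S), \source(S))$ is well defined and deterministically recoverable from $S$ alone (via the canonical order induced by \choice), so that the filter is both sound and complete. The time and space bookkeeping is then routine, amounting to reusing the per-\comp and per-\parent cost estimates already derived in Lemma~\ref{lem:time}, adjusted for the more expensive \choice of Algorithm~\ref{alg:ristr} that must compute layers.
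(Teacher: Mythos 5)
Your proposal is correct and follows essentially the same route as the paper: completeness via Lemma~\ref{lemma:key}, soundness and absence of duplicates via the deterministic four-tuple $\langle \parent(S), \parind(S), \erre(S), \source(S)\rangle$ check (giving at most $|R\cap Z|\le q$ candidates to test), and the same per-candidate cost accounting — roughly $O(q^2|\univ|\goodtime)$ per candidate once the layer computation inside \choice is charged — yielding $O(q^3|\univ|\goodtime)$ time and $O(\goodspace)$ space.
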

\begin{proof}
The proof follows from Lemma~\ref{lemma:key}. Indeed, note that $\source(S) \in \core(S) \subset \erre(S)$. As the tuple $\langle \parent(S), \parind(S), \start(S), \erre(S)\rangle$ uniquely identifies $S$, and there are at most $|R\setminus \{w\}| < q$ possible sources, there are at most $q$ possible solutions to test. As each can be found by using Lemma~\ref{lemma:key} in time $O(q^2|\univ|)$,\footnote{A naive implementation of $\parent$ takes $q$ runs of $\comp$, but it can actually be computed by a single run of $\comp$ in which at every step we check if the next node to be added would be the next node in $S$ or not.} the statement holds. At most $O(q)$ space is used by the algorithm, giving a total space usage of $O(q+\goodspace) = O(\goodspace)$.
\end{proof}

As we scan all the possible solutions of the restricted problem, by Definition~\ref{prop:rev-search}, we obtain that Algorithm~\ref{alg:main}, using \children\ and \choice\ routines in Algorithm~\ref{alg:ristr}, makes any set system induced by a connected hereditary property reverse searchable.

\begin{theorem}
\label{the:commutable-running}
The delay of Algorithm~\ref{alg:main} using routines in Algorithm~\ref{alg:ristr} is $O(\restrtime \cdot |\univ| + q^3|\univ|\cdot\goodtime\cdot\restrbound \cdot|\univ|)$.
\end{theorem}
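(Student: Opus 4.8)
The plan is to bound the delay of Algorithm~\ref{alg:main} (using the routines of Algorithm~\ref{alg:ristr}) by separately accounting for the cost incurred at each node of the directed forest and by using the standard alternating-output trick to convert a per-node bound into a delay bound. Recall from Lemma~\ref{lem:basic} and Theorem~\ref{the:main-strongly} that the recursion tree $\rec$ visits each maximal solution exactly once, and that outputting $P$ before (if depth is odd) or after (if depth is even) the recursive calls guarantees that between any two consecutive outputs the algorithm does only $O(1)$ root-to-leaf-length times the work done at a single node; thus it suffices to bound the work performed inside a single call to $\children(P)=\bigcup_{w\in\univ}\children(P,w)$.

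First I would count the cost of a single invocation of $\children(P,w)$ from Algorithm~\ref{alg:ristr}. Enumerating the solutions of the restricted problem $\restr(P,w)$ costs $O(\restrtime)$ and yields at most $\restrbound$ solutions $R$. For each such $R$ we iterate over the good singletons $s\in (R\cap Z)\setminus\{w\}$, of which there are at most $q$, and for each pair $(R,s)$ we must: compute $\mathit{coreS}=\comp(\{s\},R)|_w$; compute $S=\comp(\mathit{coreS}\cup\{w\},\univ)$; and verify the tuple $\langle \parent(S),\parind(S),\erre(S),\source(S)\rangle=\langle P,w,R,s\rangle$. By Theorem~\ref{the:commutable-enumeration} (and the footnote therein noting that $\parent$, $\parind$, $\erre$, $\source$ can all be extracted from a single run of $\comp$ on $S$ while checking whether each added element agrees with the canonical order of $S$), each of these per-$(R,s)$ steps costs $O(q|\univ|\goodtime)$, since $\comp$ performs at most $q$ rounds of $\choice$, each $\choice$ scanning at most $|\univ|$ candidates and spending $\goodtime$ per membership test (computing $\lev^X$ adds only an $O(q|\univ|)$ BFS-style overhead, subsumed in the bound). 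Hence one $(R,s)$ pair costs $O(q^2|\univ|\goodtime)$ — matching the Theorem~\ref{the:commutable-enumeration} analysis — so a full $\children(P,w)$ call costs $O(\restrtime + q^3|\univ|\goodtime\cdot\restrbound)$, and summing over all $w\in\univ$ gives $O(\restrtime\cdot|\univ| + q^3|\univ|\goodtime\cdot\restrbound\cdot|\univ|)$ work per node $P$.

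Finally I would invoke the alternating-output argument: because $\rec$ emits $P$ either immediately before or immediately after its recursive calls depending on the parity of its depth, along any root-to-leaf path the outputs interleave so that the total time elapsed between two consecutive emitted solutions is at most a constant number of per-node costs (one for climbing up, one for descending down, plus the siblings' $\children$ work which is itself charged to their own outputs). Consequently the delay is dominated by a single per-node cost, namely $O(\restrtime\cdot|\univ| + q^3|\univ|\cdot\goodtime\cdot\restrbound\cdot|\univ|)$, which is exactly the claimed bound. The main obstacle in making this rigorous is the alternating-output bookkeeping — showing that the work done in the subtrees rooted at the not-yet-output siblings is correctly amortized against their eventual outputs rather than charged to the current delay interval — but this is the classical reverse-search delay argument (already used implicitly for Lemma~\ref{lem:time}) and goes through verbatim here since each node has at least one maximal solution in its subtree and the per-node cost is uniformly bounded. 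A secondary point to check carefully is that the $O(q^2|\univ|)$ per-$(R,s)$ bound of Theorem~\ref{the:commutable-enumeration} indeed absorbs the truncated $\comp(\{s\},R)|_w$ computation and the $\lev$ computations, which it does since all of these are $\comp$-calls over ground sets of size at most $|\univ|$.
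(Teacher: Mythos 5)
Your proposal is correct and follows essentially the same route as the paper: the paper's (very terse) proof likewise charges each output solution with $|\univ|$ calls to the restricted problem plus, for each of the at most $\restrbound\cdot|\univ|$ restricted solutions, the $O(q^3|\univ|\goodtime)$ computation of Theorem~\ref{the:commutable-enumeration}, leaving the conversion from per-node cost to delay implicit. Your explicit invocation of the alternating-output (odd/even depth) argument just spells out what the paper compresses into ``the total running time follows easily,'' so there is nothing further to add.
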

\begin{proof}
For every solution, we solve $|\univ|$ times the restricted problem. Then, for every solution of the restricted problem, we run the computations described in Lemma~\ref{the:commutable-enumeration}. The total running time follows easily.
\end{proof}

\section{Stateless Enumeration and Memory Usage}
\label{sec:stateless}
The recursive version of our algorithms do not yet guarantee polynomial space: indeed, the number of nested recursive calls could be non-polynomial. Moreover, Algorithm~\ref{alg:ristr} stores the solutions of the restricted problem, which may be non-polynomial in number.

We address the first of these problems by removing the explicit recursion. The state of the computation inside a certain recursive call is fully determined by the variables $P$, $w$, $X$ in Algorithm~\ref{alg:main} and $P$, $w$, $R$, $s$ in Algorithm~\ref{alg:ristr}. Moreover, when a recursive call is made in the algorithms the conditions written in the code imply that we can easily (and cheaply) compute the state variables using only information about the child. It is thus easy to modify these two algorithms to simulate the recursion avoiding an explicit stack.

With regard to the restricted problem, note that we can iterate over the solutions of the restricted problem using $\restrtime$ time and $\restrspace$ space: we can restart the iteration whenever we backtrack in the (simulated) recursion tree, as this does not impact the delay. These observations allow us to state the following.

\begin{theorem}
\label{the:space-bounds}
The stateless versions of Algorithm~\ref{alg:main} and Algorithm~\ref{alg:ristr} achieve the same delays while taking, respectively, $O(\goodspace)$ and $O(\goodspace+\restrspace)$ space.
\end{theorem}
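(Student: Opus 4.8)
\textbf{Proof plan for Theorem~\ref{the:space-bounds}.}
The plan is to verify two things for each of the two algorithms: (i)~that the stateless reimplementation produces exactly the same sequence of outputs, with the same delay, as the recursive version; and (ii)~that the additional working memory is bounded as claimed. The delay claim follows from Lemmas~\ref{lem:time} and~\ref{the:commutable-running} once we argue that eliminating the explicit recursion stack and restarting the restricted-problem iterator do not asymptotically increase the work done between consecutive outputs; the space claim is where the real content lies, so I would spend most of the write-up there.

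First I would make precise the statelessness of the recursion itself. For Algorithm~\ref{alg:main}, a node of the computation tree is a maximal solution $X$, and a recursive call $\rec(X)$ is fully described by the triple $\langle P,w,S\rangle$ where $S=X$ is the current child, $P=\parent(S)$, and $w=\parind(S)$ — all of which, by Definition~\ref{def:parent}, are computable from $S$ alone using $O(q|\univ|)$ runs of \choice, i.e.\ in $O(\goodspace)$ space since \choice needs only to test membership in $\good$ and maintain $X$ and $X^+_A$. Hence on returning from a nested call on child $S'$ one recovers the parent frame by recomputing $\parent(S')$, and to find the ``next'' child after $S'$ one re-enters $\children(P,w)$ and iterates the loop over $X\subset P\setminus\{w\}$ (and over $w\in\univ$) until one passes the subset that produced $S'$; no stack is needed, only the current solution plus the scratch space of one invocation of \comp/\parent, which is $O(\goodspace)$. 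The same bookkeeping argument applies to Algorithm~\ref{alg:ristr} with the quadruple $\langle P,w,R,s\rangle$ in place of $\langle P,w,X\rangle$; here I would invoke the footnote's remark that $\parent(S)$ can be obtained in a single run of \comp, so the per-solution overhead stays within the delay bound of Theorem~\ref{the:commutable-running}.

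Next I would handle the restricted problem's own memory. The naive reading of Algorithm~\ref{alg:ristr} stores all of $\restr(P,w)$, which may be exponentially large; instead we treat $\restr$ as a streaming iterator that emits its solutions one at a time using $\restrtime$ time and $\restrspace$ space, and is restarted from scratch each time we backtrack into the frame $\langle P,w,\cdot,\cdot\rangle$. Since we only need, at any moment, the single current solution $R$ together with enough information to advance the iterator, the total space for the restricted machinery is $O(\restrspace)$; combined with the $O(\goodspace)$ for the outer simulation and the $O(q)\subseteq O(\restrspace)$ for the local variables in Theorem~\ref{the:commutable-enumeration}, this gives $O(\goodspace+\restrspace)$ for Algorithm~\ref{alg:ristr}, and (with no restricted problem in play) $O(\goodspace)$ for Algorithm~\ref{alg:main}. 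Finally I would check the delay is preserved: each backtrack triggers at most one fresh pass over $\roots$ or over the $\children$ loop, whose cost is already absorbed in the stated per-solution bounds, and restarting the $\restr$ iterator costs $O(\restrtime)$, again already counted; a standard output-queueing argument (buffer one solution, alternate the emission between descent and ascent, matching the ``output if depth is odd/even'' comments in \rec) converts the amortized bound into a worst-case delay.

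\textbf{Main obstacle.} The delicate point is not the space accounting but showing the delay is genuinely unchanged: when the recursion backtracks up several levels in a row through nodes that have no further children, a stateless implementation must recompute parents and re-scan sibling loops at each of those levels, so one must argue this cascade is charged against solutions already output (the alternating odd/even output trick is exactly what makes the amortization collapse to a true delay). I expect the proof to lean on precisely that charging argument together with the fact — already established in Lemmas~\ref{lem:time} and~\ref{the:commutable-enumeration} — that recomputing a frame from its current solution is no more expensive than generating that solution was in the first place.
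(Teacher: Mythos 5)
Your proposal is correct and follows essentially the same route as the paper: the paper's (terse) argument likewise observes that each frame is fully recoverable from the current child via $\parent$, $\parind$, $\core$ (resp.\ $\erre$, $\source$), simulates the recursion without a stack (made explicit in the appendix's stateless algorithm), treats $\restr(P,w)$ as a restartable iterator, and relies on the odd/even-depth output alternation already present in the pseudocode to keep the delay unchanged. Your write-up merely spells out the charging argument for backtracking cascades in more detail than the paper does.
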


Note that in the case of \bcliques, $\goodtime$, $\goodspace$, $\restrtime$ and $\restrspace$ are $O(q)$. 
Hence, as an application of Theorems~\ref{the:commutable-running} and~\ref{the:space-bounds}, we obtain the following.
\begin{theorem}
There exists an algorithm to list all the maximal \bcliques with delay $O(q^5|\univ|^2)$ and $O(q)$ space.
\end{theorem}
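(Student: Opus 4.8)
The plan is to combine Theorems~\ref{the:commutable-running} and~\ref{the:space-bounds} by instantiating the generic parameters $\goodtime$, $\goodspace$, $\restrtime$, $\restrspace$, and $\restrbound$ with the concrete bounds that hold for the \bclique set system, and then simplifying. Recall that a \bclique of $G$ is a subset of vertices that induces a clique in $G$ and is connected in $G_B$; this is a connected hereditary property, so all the results of Section~\ref{sec:commutable} apply.

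First I would establish the five parameter bounds. Checking whether $X\in\good$ amounts to verifying that $X$ induces a clique in $G$ and is connected in $G_B$; both checks run over the at most $q$ vertices of $X$ and their incident edges, so $\goodtime = O(q)$ (after, say, storing adjacency as a hash set, or $O(q\log|\univ|)$ with a balanced tree — we absorb log factors, or simply note $q\le|\univ|$ and these are polynomial) and the working space to hold $X$ and perform the traversal is $\goodspace = O(q)$. For the restricted problem $\restr(P,w)$ we must list all maximal \bcliques $R\neq P$ inside the reduced ground set $P\cup\{w\}$ of size at most $q+1$: since the only vertices available lie in $P\cup\{w\}$, any such $R$ has $|R|\le q+1$, and one can enumerate them with a standard output-sensitive maximal-clique-style routine on $q+1$ vertices; this gives $\restrtime = O(q)$ per solution (polynomial in the reduced instance) and $\restrspace = O(q)$. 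Finally, since every solution $R$ of the restricted problem omits at least one vertex of $P$ (as $R\neq P$ and $R\subseteq P\cup\{w\}$), there are at most $q$ of them — more carefully, each is determined by which vertex of $P$ it excludes together with whether it contains $w$ — so $\restrbound = O(q)$.

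Next I would plug these into Theorem~\ref{the:commutable-running}, whose delay bound is $O(\restrtime\cdot|\univ| + q^3|\univ|\cdot\goodtime\cdot\restrbound\cdot|\univ|)$. Substituting $\restrtime = O(q)$, $\goodtime = O(q)$, $\restrbound = O(q)$ gives
\[
O\bigl(q\,|\univ| + q^3|\univ|\cdot q\cdot q\cdot|\univ|\bigr) = O\bigl(q^5|\univ|^2\bigr),
\]
the first term being dominated. Note the statement to prove cites ``Theorems~\ref{the:commutable-running} and~\ref{the:space-bounds}'' but the displayed $|\univ|$-factor in Theorem~\ref{the:commutable-running} already accounts for iterating $w$ over $\univ$; if one prefers, the $|\univ|$ in front of $\restrtime$ should be understood as part of solving the restricted problem for each of the $|\univ|$ choices of $w$. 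For space, Theorem~\ref{the:space-bounds} gives $O(\goodspace + \restrspace) = O(q + q) = O(q)$ for the stateless version of Algorithm~\ref{alg:ristr}, which is exactly the claimed $O(q)$ memory.

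The only real subtlety — and the step I would be most careful about — is justifying the restricted-problem bounds, namely that $\restr(P,w)$ can be solved with polynomial delay and $O(q)$ space on an instance with at most $q+1$ ground elements, and that $\restrbound = O(q)$. The delay/space part follows because on a ground set of size $q+1$ one can afford any polynomial-time maximal-solution routine (even a brute-force sweep over the $O(q)$ candidate maximal solutions, checking maximality via $\goodtime$ calls), and the count bound follows from the structural observation above that distinct restricted solutions are indexed by the (at most $q$) vertices of $P$ they drop. Everything else is a mechanical substitution into the two cited theorems and a dominated-term simplification, so I would keep the write-up short: state the five parameter values with a one-line justification each, then the two substitutions.
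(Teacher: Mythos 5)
Your route is the same as the paper's: instantiate the parameters for \bcliques and substitute into Theorems~\ref{the:commutable-running} and~\ref{the:space-bounds}; the arithmetic $O(q|\univ| + q^3|\univ|\cdot q\cdot q\cdot|\univ|) = O(q^5|\univ|^2)$ and the space bound $O(\goodspace+\restrspace)=O(q)$ are exactly the intended application.

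However, the step you yourself identify as the crux is argued incorrectly. The claim that $\restr(P,w)$ has $O(q)$ solutions because ``every solution $R\neq P$ omits at least one vertex of $P$'' is a non sequitur: a maximal solution is not determined by a single omitted vertex (distinct solutions can omit overlapping sets, and $P\cup\{w\}$ itself, if a solution, omits none), and for general connected hereditary systems the restricted problem can have far more than $q$ solutions --- which is precisely why the framework carries $\restrbound$ as a separate parameter. The bound you need is true for \bcliques, but for a problem-specific structural reason, not by your counting: in the reduced ground set $P\cup\{w\}$, any maximal solution other than $P$ must contain $w$ (if $R\subseteq P$ is a proper subset, strong accessibility yields $z\in P\setminus R$ with $R\cup\{z\}\in\good$, so $R$ is not maximal); a \bclique containing $w$ can only use vertices of $P$ adjacent to $w$ in $G$, and since $\{w\}\cup(P\cap N_G(w))$ is itself a clique, the unique maximal choice is the connected component of $w$ in $G_B$ restricted to that set. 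Hence $\restrbound\le 2$, and this single component computation (with adjacency access) also substantiates $\restrtime,\restrspace=O(q)$, whereas your appeal to ``a standard maximal-clique-style routine'' plus a sweep over ``the $O(q)$ candidates'' rests on the same unfounded count. With this repair the substitution goes through as you wrote it (in fact it gives $O(q^4|\univ|^2)$, so the stated $O(q^5|\univ|^2)$ delay certainly holds).
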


\begin{corollary}
\label{the:commutable-restricted}
For a \oursys set system, the solutions of the general problem can be enumerated in polynomial total time and polynomial space if and only if the solutions of the restricted problem can be enumerated in polynomial total time and polynomial space. Moreover, the solutions of the general problem can be enumerated in polynomial delay and polynomial space if the solutions of the restricted problem can be enumerated in polynomial time and polynomial space.
\end{corollary}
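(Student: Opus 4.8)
The plan is to read off both directions from the machinery already built in Sections~\ref{sec:our-enum-framework} and~\ref{sec:commutable}, tracking the parameters $\restrtime$, $\restrspace$, and $\restrbound$ through Theorems~\ref{the:commutable-running} and~\ref{the:space-bounds}. The ``only if'' direction of the first claim is the easy half: the restricted problem $\restr(P,w)$ is literally a listing problem for the commutable set system $(P\cup\{w\},\good)$ (minus the single solution $P$), so any algorithm solving the general problem in polynomial total time and polynomial space solves the restricted one within the same bounds — just run it on the reduced ground set and discard $P$ from the output. I would spell this out in one or two sentences, noting only that $|P\cup\{w\}|\le |\univ|$ so ``polynomial in the instance'' is preserved.

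For the ``if'' direction of the first claim, I would instantiate Algorithm~\ref{alg:main} with the routines of Algorithm~\ref{alg:ristr}. By the hypothesis, the restricted problem is solvable in $\restrtime = poly(|\univ|)$ total time and $\restrspace = poly(|\univ|)$ space, which forces $\restrbound \le \restrtime = poly(|\univ|)$ (the number of solutions cannot exceed the total running time of an algorithm that lists them). Also $\goodtime,\goodspace = poly(|\univ|)$ by the standing assumption that $\good$ is given algorithmically. Plugging these into Theorem~\ref{the:commutable-running} gives a per-solution cost — hence, summed over all $\alpha$ maximal solutions, a total running time — that is $poly(|\univ|)\cdot\alpha$, i.e.\ polynomial total time; and Theorem~\ref{the:space-bounds} gives space $O(\goodspace+\restrspace) = poly(|\univ|)$. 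This establishes the forward implication of the equivalence, and the two directions together give the ``if and only if''.

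For the second, stronger claim, I would strengthen the hypothesis on the restricted problem from ``polynomial total time'' to ``polynomial time'' (i.e.\ the restricted problem admits a polynomial-time listing routine, so in particular $\restrtime = poly(|\univ|)$ bounds the time to enumerate \emph{all} its solutions, and again $\restrbound = poly(|\univ|)$). Then Theorem~\ref{the:commutable-running} already asserts a \emph{delay} bound of $O(\restrtime\cdot|\univ| + q^3|\univ|\cdot\goodtime\cdot\restrbound\cdot|\univ|)$, which under the hypothesis is $poly(|\univ|)$; combined with the $O(\goodspace+\restrspace) = poly(|\univ|)$ space bound of Theorem~\ref{the:space-bounds}, this yields polynomial delay and polynomial space, as desired. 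I expect the only real subtlety — and the step worth stating carefully rather than the routine parameter-chasing — is the bookkeeping that lets the stateless version restart the restricted-problem iteration on backtracking without blowing up the delay (already handled in the discussion preceding Theorem~\ref{the:space-bounds}) and the observation that a polynomial-\emph{total-time} restricted routine still gives only polynomial \emph{total} time — not polynomial delay — for the general problem, which is exactly why the second sentence of the corollary needs the stronger ``polynomial time'' hypothesis.
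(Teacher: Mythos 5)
Your overall route is the intended one: the paper states this corollary without a separate proof, as a direct consequence of Theorem~\ref{the:commutable-running} and Theorem~\ref{the:space-bounds} together with the observation (already made in Section~\ref{sec:extended-work}) that the restricted problem is a sub-problem of the general one, so the ``only if'' direction is immediate. Your parameter-chasing is exactly that derivation.

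There is, however, one genuine wrinkle in your ``if'' direction, and it shows up as an internal inconsistency in your own write-up. You read the hypothesis ``polynomial total time'' as $\restrtime = poly(|\univ|)$ and deduce $\restrbound \le \restrtime = poly(|\univ|)$. But under that reading, plugging into Theorem~\ref{the:commutable-running} gives a polynomial \emph{delay} bound, not merely polynomial total time --- which both collapses the two sentences of the corollary into one and contradicts your closing remark that a polynomial-total-time restricted routine yields only polynomial total time for the general problem. Under the standard output-sensitive reading of ``polynomial total time'' (time polynomial in the instance plus the number of restricted solutions, so $\restrbound$ may be superpolynomial in $|\univ|$), the delay bound $O(\restrtime\cdot|\univ| + q^3|\univ|^2\goodtime\restrbound)$ is \emph{not} polynomial in $|\univ|$, and to conclude polynomial total time for the general problem you must still argue that the accumulated cost over the whole run is polynomial in $|\univ|$ plus the number $\alpha$ of maximal solutions output; this requires bounding $\restrbound$ (equivalently, charging the solutions of each $\restr(P,w)$ to solutions of the general problem), a step your proposal does not supply. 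For hereditary systems such a charge is easy (each maximal $R$ in $(P\cup\{w\},\good)$ determines a distinct maximal solution via intersection), but for general \oursys systems, e.g.\ connected hereditary ones, $S\cap(P\cup\{w\})$ need not lie in $\good$, so the injectivity is not automatic and the point deserves an explicit argument (the paper leaves it implicit as well). Fixing the statement you actually prove --- either by strengthening the hypothesis as in the second sentence, or by adding the missing amortization/bounding of $\restrbound$ --- is what separates your sketch from a complete proof of the first sentence.
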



\newcommand{\node}{\textsc{node}\xspace}
\newcommand{\iterrec}{\textsc{improved-}\rec}
\newcommand{\nextnode}{\textsc{next-}\node}
\newcommand{\nextrestr}{\textsc{next-}\erre}
\newcommand{\nextchild}{\textsc{next-child}\xspace}
\newcommand{\parentstate}{\textsc{parent-state}\xspace}
\newcommand{\isroot}{\textsc{is-root}\xspace}
\newcommand{\certificate}{\textsc{child-exists}\xspace}
\newcommand{\nll}{\rm{\texttt{null}}\xspace}


\section{Conclusions}

In this paper we described a space-efficient framework to design algorithms for enumerating the maximal solutions in strongly accessible set systems, whose complexity is $O(\alpha 2^q)$ time. This is, to the best of our knowledge, the first non-trivial bound for this class of enumeration problems, that approaches the lower bound of $O(\alpha 2^{q/2})$ time given by Lawler~\cite{lawler1980generating}. 

Furthermore, we solve the open problem left by Cohen et al.~\cite{cohen2008generating} by giving an improved version of our framework which links its delay to the \textit{restricted problem}, but still using polynomial space, for connected hereditary graph properties.

To give a more complete picture, we should remark that the improved version of our framework is not limited to just connected hereditary graph properties: we call the class of set systems to which it can be applied \oursys set systems, which corresponds to all set systems that are strongly accessible and respect the \oursys property.
It is easy to show that the class of \oursys set systems includes more than connected hereditary properties: for example, it is easy to see that required hereditary and required connected hereditary properties (defined in Section~\ref{sec:introduction}) verify the \oursys property, thus both version of our framework can be applied to them.

Future work is aimed at applying this framework to a variety of problems, with some already promising preliminary results. A problem that remains open is applying reverse search to a wider context of problems, e.g., without relying on the \oursys property or strong accessibility.

\newpage

\bibliographystyle{abbrv}

\newpage
\appendix

\section*{APPENDIX}

\section{Stateless Algorithm}

\begin{algorithm}[h]
\small
\SetKwInOut{Input}{Input}
\SetKwInOut{Output}{Output}
\SetKwRepeat{DoWhile}{do}{while}
\caption{Stateless framework with minimal memory}
\label{alg:iterdesc}
\BlankLine
\DontPrintSemicolon
\SetKwProg{myproc}{Function}{}{}
\myproc{\iterrec (X)}{
    $P\gets X$\;
    $S\gets \nll$\;
    $w\gets \nextnode(\nll)$\;
    $R\gets \nextrestr(P,w,\nll)$\;
    \DoWhile{true}{
        \DoWhile{$w\gets\nextnode(w)\neq \nll$\label{ln:ncand}}{
            \DoWhile{$R\gets\nextrestr(P,w,R)\neq \nll$\label{ln:nrestr}}{
                \If{$S \gets \nextchild(P,w,R,S) \neq \nll$\label{ln:nchild}}                {
                 $\langle P,S,w,R \rangle \gets \langle S, \nll, \nll, \nll \rangle$ /* recur in child */ \;
                    \textbf{break}\;
                }
            }
        }
        
        \lIf{\isroot(P)}{\Return{}}
        \lElse{$\langle P,S,w,R \rangle \gets \langle \parent(P), P, \parind(P), \erre(P) \rangle$ /* backtrack */}
    }
}

\myproc{\isroot(X)}{
    \Return{$\parind(X)= \source(X)$}
}

\myproc{$\nextnode(w)$}{
\Return{$\min\{v\in \univ : v > w\}$}
}

\myproc{$\nextrestr(P,w,R)$}{
    \Return{the solution succeeding $R$ in the restricted problem $P\cup\{w\}$ (or \nll if $R$ is the last)}
}

\myproc{$\nextchild(P,w,R,S)$}{
        \ForEach{$x\in R : x > \source(S)$}{
            $C \gets \comp(\{x\},R)|_w$\;
            $D \gets \comp(C\cup \{w\}, \univ)$\;
            
            \If{$\langle \parent(D), \parind(D), \erre(D), \source(D)\rangle = \langle P,w,R,x\rangle$}{
                \Return{$D$}
            }
        }
        \Return{\nll}
}

\end{algorithm}

\section{Notes on set systems and properties}\label{apx:systems}

In this section we give some additional details about the set systems and properties mentioned in the paper, namely independence, strongly accessible and commutable set systems, and hereditary and connected hereditary properties.

Strongly accessible set systems contain all the other mentioned classes.
Strongly accessible set systems, in turn, a subset of the more general \emph{accessible set systems}, sometimes referred to as \emph{weakly accessible set systems}~\cite{boley2010listing}.
Recalling that set systems are defined as a family $\good$ of subsets of a $\univ$ set, weakly accessible set systems are those for which $X\in\good$ implies that if $X\neq \emptyset$ then $\exists x\in X$ s.t. $X\setminus\{x\}\in\good$; as in the other classes it is also assumed that $\emptyset\in\good$.

Hereditary properties and independence set systems  correspond to the same class, i.e., they both require that $X\in\good$ and $Y\subseteq X$ implies $Y\in\good$. It should be noted that Cohen et al.~\cite{cohen2008generating} defined such properties specifically on graphs, but remarked that different kinds of properties not on graphs could still be modelled as hereditary by representing the domain $\univ$ as the nodes of a graph with no edges.

Showing that set systems based on hereditary and connected hereditary properties form commutable set systems is straighforward.
We recall the \emph{commutable property}, that is given $X,Y\in \good$ with $Y\subset X$ and $a,b \in X\setminus Y$, if $Y\cup\{a\} \in \good$ and $Y\cup\{b\}\in\good$, then $Y\cup\{a,b\}\in \good$.
For hereditary properties, clearly $Y\cup\{a,b\}\in \good$ as it is a subset of $X$.
For connected hereditary the same holds: as $Y\cup\{a\}$ and $Y\cup\{b\}$ are two intersecting connected subgraphs of $X$, then $Y\cup\{a,b\}$ is also a connected subgraph of $X$, thus $Y\cup\{a,b\}\in \good$.

\end{document}